\tikzset{quantum/.style={decorate, decoration=snake}}
\newcommand{\ket}[1]{|#1\rangle}							
\newcommand{\bra}[1]{\langle#1|}
\newcommand{\ketbra}[2]{|#1\rangle\langle#2|}
\newcommand{\braket}[2]{\langle #1\lvert#2\rangle}
\newcommand{\abs}[1]{\lvert #1\rvert}
\newcommand{\norm}[1]{\| #1\|}
\newcommand{\QPVBB}{$\mathrm{QPV}_{\mathrm{BB84}}$}
\newcommand{\QPVBBeta}{$\mathrm{QPV}_{\mathrm{BB84}}^{\eta}$}
\renewcommand{\Im}{\operatorname{Im}}
\newcommand{\tr}[1]{\mathrm{Tr}\left[#1\right]}
\newcommand{\ptr}[2]{\mathrm{Tr}_{#1}\left[#2\right]}
\theoremstyle{plain}
\newtheorem{theorem}{Theorem}[section]
\newtheorem{observation}[theorem]{Observation}
\newtheorem{remark}[theorem]{Remark}
\newtheorem{prop}[theorem]{Proposition}
\newtheorem{definition}[theorem]{Definition}
\newtheorem{ex}[theorem]{Example}
\newtheorem{lemma}[theorem]{Lemma}
\newtheorem{result}[theorem]{Result}
\begin{document}

\title{Lossy-and-Constrained Extended Non-Local Games with Applications to Quantum Cryptography
}

\author[1,2]{Lloren\c{c} Escol\`a-Farr\`as}
\author[1,2]{Florian Speelman}
\newcommand{\lle}[1]{{\color{blue}#1}}
\newcommand{\fs}[1]{{\textcolor{red}{[Florian: #1]}}}

\affil[1]{QuSoft, CWI Amsterdam, Science Park 123, 1098 XG Amsterdam, The Netherlands}
\affil[2]{Multiscale Networked Systems (MNS),  Informatics Institute, University of Amsterdam, Science Park 904,  1098 XH Amsterdam, The Netherlands }

\maketitle
\vspace{-1em}
\begin{abstract}
Extended non-local games are a generalization of monogamy-of-entanglement games, played by two quantum parties and a quantum referee that performs a measurement on their local quantum system. Along the lines of the NPA hierarchy, the optimal winning probability of those games can be upper bounded by a hierarchy of semidefinite programs (SDPs) converging to the optimal value. Here, we show that if one extends such games by considering \emph{constraints} and \emph{loss}, motivated by experimental errors and loss through quantum communication, the convergence of the SDPs to the optimal value still holds. We give applications of this result, and we compute SDPs that show tighter security of protocols for relativistic bit commitment, quantum key distribution, and quantum position verification. 
\end{abstract}
\tableofcontents
\section{Introduction}

The study of quantum correlations attainable by distant parties, known as non-local correlations, has been a broad and well-studied topic in quantum information theory. They are not only interesting from a fundamental point of view, given that it is well-known that quantum distant parties can attain correlations that classical physics is not able to describe \cite{Bell_1964}, but also lead to many applications such as secure key distribution \cite{acin_device-independent_2007}, 
certified randomness \cite{pironio_random_2010}, 
reduced communication complexity \cite{buhrman_nonlocality_2010},
self-testing \cite{mayers_self_2004,supic_self-testing_2019},
and computation \cite{anders_computational_2009}. 

Non-local correlations are often studied in the literature as non-local games, which provide an operational framework for understanding them. Usually, a classical referee sends classical questions to the two distant parties, namely Alice and Bob, who have to respond classical answers. The game is defined according to a predicate that accepts the answers as correct if they fulfill a previously agreed relation with the questions. One of the most well-known examples in the literature is given by the CHSH game \cite{CHSHArticle}. 

The usage of semidefinite programming (SDP) techniques to bound correlations of non-local games was initiated by Cleve, H{\o}yer, Toner and  Watrous \cite{https://doi.org/10.48550/arxiv.quant-ph/0404076}, Wehner \cite{Wehner_2006}, and Liang and Doherty~\cite{PhysRevA.75.042103}. Later on, Navascués, Pironio and Acín introduced an infinite hierarchy of conditions, the so-called NPA hierarchy, necessarily satisfied by any set of quantum correlations, each of them testable with SDPs \cite{PhysRevLett.98.010401}, with the subsequent result showing that this hierarchy is complete \cite{NPA2008}. 

Non-local correlations have also been investigated beyond the classical questions and answers scenario, e.g.\ in a quantum XOR game~\cite{regev2012quantumxorgames}, a quantum referee who sends quantum questions to Alice and Bob, who have to respond with classical answers ---classical-quantum games---, or in a rank-one quantum game~\cite{cooney2013rankonequantumgames}, both the questions and the answers are quantum. 
In this work, we consider a slightly different scenario. In terms of games, the referee is quantum and performs a measurement on their local register of a tripartite system shared together with the two collaborative parties Alice and Bob. These type of games are known in the literature as extended non-local games, introduced by \cite{Extended_non-local_games_andMoE_games}. They are a generalization of monogamy-of-entanglement games \cite{TomamichelMonogamyGame2013},
where the referee, Alice and Bob pre-share a joint quantum state, the referee performs a local measurement on their local register, chosen from a given set of measurements, then, the referee announces the chosen measurement, and the task of Alice and Bob is to guess the outcome.
These types of games have applications to quantum cryptography tasks, such as quantum position verification and device-independent quantum key distribution \cite{TomamichelMonogamyGame2013}, and have been studied in the context of quantum steering~\cite{russo2017extendednonlocalgamesquantumclassical}. 
Using similar ideas as in \cite{NPA2008}, it was shown \cite{Extended_non-local_games_andMoE_games} that there exists a hierarchy of SDPs that bound the optimal winning probability of an extended non-local game, and it is such that converges to its optimal value. 

When analyzing the applications of such games, for example in quantum communication tasks, one has to take into account experimental parameters such as errors and loss of the quantum information, and treating these two separately can give significant improvements in the analysis, see e.g.~\cite{Escol_Farr_s_2023}, where a loss-tolerance of up to 50\% is shown for a given protocol with only around 15\% error-tolerance \cite{TomamichelMonogamyGame2013},  which is impossible if naively treating loss as error.
Moreover, besides errors and loss, when executing a protocol it can be the case that certain answers are expected to be observed with a given frequency, for example: a certain number of correct answers;  some wrong answers (caused e.g.~by errors); some `empty' answers, if the quantum information got lost when transmitted. Additionally, in some protocols there are answer combinations that are simply inconsistent with the setting, and are never expected to be observed: including such constraints in the analysis of extended non-local games could enable improved bounds in practical applications.\\

\textbf{Contributions.} We introduce a modification of extended non-local games, which we call \emph{lossy-and-constrained} extended non-local games, that takes into account errors, loss and the fact that certain answers are expected to be observed with a given frequency. 
These games are inspired by practical considerations, describing scenarios where honest parties receive quantum states over a lossy channel, and we aim to prevent security issues that can occur because of such transmission loss. 
We show that similar results as in \cite{Extended_non-local_games_andMoE_games} can be applied to the \emph{lossy-and-constrained} version, and that there exists a hierarchy of SDPs converging to the optimal value that can be attained by the quantum parties Alice and Bob. 

We consider different monogamy-of-entanglement and extended non-local games and we analyze them in their lossy-and-constrained version. By computing the corresponding semidefinite programs, we show a way to verify some previously known results by solely computing an SDP. Moreover, we find new tighter and new tight results for those games.  It is worth to highlight that for most of the games that we analyze with loss or constraints, we obtain tight values by using the first level of the hierarchy --- showing that resulting SDPs are numerically solvable in practice and going to higher levels is not required. We created Python codes to compute these values via SDP \cite{code}, using \emph{cvxpy} \cite{anders_computational_2009}. These programs can be modified and adapted to analyze other games.

Finally, we study applications to quantum cryptography, showing that (lossy-and-constrained) non-local games can be used to prove security for certain protocols in a simpler way, with a method that can executed using only the description of the game. Using our analysis, we provide tighter security bounds. 
First, we study a relativistic bit commitment protocol originally introduced by Kent \cite{Kent_2012_BC}, where $n$-BB84 states \cite{BB84} are used in the protocol. In \cite{Kaniewski_2013}, security was shown for large $n$, a result that was tightened in \cite{Pital_a_Garc_a_2018} that applied for every $n$. Here, we show that the security of this protocol can be reduced to an extended non-local game, and we provide tighter bounds by computing the corresponding SDPs for small values of~$n$.  Second, we use a constrained extended non-local game to show security of device-independent quantum key distribution \cite{BB84}, solving SDPs for small $n$, where $n$ is the number qubits used in the protocol. Third, we use lossy-and-constrained extended non-local games to study security for certain protocols in quantum position verification (QPV) in the presence of photon loss. For the  \QPVBB~protocol with loss,  we obtain, in an easier way, the results shown in \cite{Escol_Farr_s_2023}, where it was necessary to prove a relation between certain operator norms and error and then use it in the NPA hierarchy to obtain the values. The advantage is that by considering the corresponding game with loss and constraints, it is enough to have the description to compute the SDP providing the optimal values. We improve the bounds found in \cite{Escol_Farr_s_2023} for an extension of the above protocol, and we show that they are tight. We finally apply our results to show security of the 2-fold parallel repetition of  \QPVBB~with loss.

\section{Preliminaries}
For $n\in\mathbb N$, we will denote $[n]:=\{0,\ldots,n-1\}$. Let $\mathcal{H}$, $\mathcal{H'}$ be finite-dimensional Hilbert spaces, we denote by $\mathcal{B}(\mathcal{H},\mathcal{H'})$ the set of bounded operators from $\mathcal{H}$ to $\mathcal{H'}$ and $\mathcal{B}(\mathcal{H})=\mathcal{B}(\mathcal{H},\mathcal{H})$. For $A,B\in\mathcal{B}(\mathcal H)$, we denote $A\succeq B$ if $A-B$ is positive semidefinite. 
Denote by $\mathcal{S}(\mathcal{H})$ the set of quantum states on $\mathcal{H}$,~i.e.\ $\mathcal{S}(\mathcal{H})=\{\rho\in\mathcal{B}(\mathcal{H})\mid \rho\geq0, \tr{\rho}=1)\}$. A pure state will be denoted by a ket $\ket{\psi}\in\mathcal{H}$. A maximally entangled state or EPR pair is the state $\ket{\Phi^+}=\frac{1}{\sqrt{2}}(\ket{00}+\ket{11}).$ We will refer to basis 0 and 1 to denote the computational and Hadamard basis, respectively. The Hadamard transformation will be denoted by $H=\frac{1}{\sqrt{2}}\begin{pmatrix}
1 & 1 \\
1& -1 
\end{pmatrix}$.   For bit strings $x,a\in\{0,1\}^n$ we denote 
\begin{equation}
    \ketbra{a^x}{a^x}:=\ketbra{a_0^{x_0}}{a_0^{x_0}}\otimes\cdots\otimes\ketbra{a_{n-1}^{x_{n-1}}}{a_{n-1}^{x_{n-1}}},
\end{equation}
where $\ketbra{a_i^{x_i}}{a_i^{x_i}}=H^{x_i}\ketbra{a_i}{a_i}H^{x_i}$. 
The Hamming distance $d_{H}$ between two bit strings $x,y\in\{0,1\}^n$ is the number of positions at which they differ, i.e.
\begin{equation}
    d_{H}(x,y):=\{i\in[n]\mid x_i\neq y_i\}.
\end{equation}
For a set $\mathcal{X}$, we denote $\mathcal{X}^{\times n}:=\mathcal{X}\times\cdots\times\mathcal{X}$, where the Cartesian product is taken $n$ times.

\section{Lossy-and-constrained extended non-local games}
Following \cite{Extended_non-local_games_andMoE_games}, we describe extended non-local games. Let $\mathcal{Z}$, $\mathcal{X}$, $\mathcal{Y}$, $\mathcal{V}$, $\mathcal{A}$ and $\mathcal{B}$ be finite non-empty alphabets, let $q$ be a probability distribution over $\mathcal{Z}\times\mathcal{X}\times\mathcal{Y}$, let $V^{z}_{v}$ be a square Hermitian matrix of dimension $m\in\mathbb{N}$ for all ${(z,v)\in\mathcal{Z}\times\mathcal{V}}$ and let $V:=\{V^z_v\}_{z,v}$. Three parties will play a role in the extended non-local games: a referee, Alice and Bob, whose associated Hilbert spaces will be denoted by $\mathcal{ H}_R, \mathcal{ H}_A$ and $\mathcal{ H}_B$, respectively.

\begin{definition}
An \emph{extended non-local game} $\mathcal{G}$, played by a referee $R$ and two collaborative parties Alice ($A$) and Bob ($B$), denoted by the tuple
\begin{equation}
    \mathcal{G}=(q,V,W),
\end{equation}
where $W=\{(z,x,y,v,a,b)\mid pred(v,a,b|z,x,y)=1\}$ (winning set), for a certain predicate function $pred(v,a,b|z,x,y)\in\{0,1\}$, is described as follows: 
\begin{enumerate}
    \item Alice and Bob prepare a tripartite state $\rho_{RAB}$ where the dimension of the register $R$ is $m$, i.e.\ the reduced state $\rho_R\in\mathcal{S}(\mathcal{H}_R)$, for $\mathcal{H}_R\cong \mathbb C^m$.
    \item They send the register $R$ of the tripartite state to the referee. The two parties are no longer allowed to communicate. 
    \item The referee sends questions $x\in\mathcal{X}$ and $y\in\mathcal{Y}$ to Alice and Bob and picks $z\in\mathcal{Z}$ according to the probability distribution $q(z,x,y)$. Then, Alice and Bob have to answer $a\in\mathcal{A}$  and $b\in\mathcal{B}$, respectively. Denote by $\rho_{R_{a,b}}^{x,y}\in\mathcal{S}(\mathcal{H}_R)$ the resulting quantum state held by $R$ after Alice and Bob send $a$ and $b$. The average pay-off for the players is given by
    \begin{equation}
        \sum_{z,x,y,v,a,b\in W}q(z,x,y)\tr{V^{z}_{v}\rho_{R_{a,b}}^{x,y}}.
    \end{equation}
\end{enumerate}
\end{definition}
\noindent See Fig.~\ref{fig:game} for a schematic representation of an extended non-local game. 

\begin{figure}[h]
    \centering
    \begin{tikzpicture}[node distance=1cm, auto]
    \node (xV) {$z$};
    \node [right=3cm of xV] (emptyx) {};
    \node [right=0.4cm of emptyx] (x1) {$x$};
    \node [right=3cm of x1] (emptyx1) {};
    \node [right=0.4cm of emptyx1] (x2) {$y$};
    %\node [right=of x2] (emptyx2) {};
    %\node [right=of emptyx2] (x3) {$x$};
    
    \node [below=0.9 of xV] (belowxV){};
    \node [below=0.3cm of xV] (belowxVbefore){};
    \node [left=0.0 of belowxV] (leftbelowxV){};
    \node [above=0.1 of belowxV] (aboveleftbelowxV){$R$};
    \node [below=0.4 of aboveleftbelowxV] (aboveleftbelowxV_meas){};
    \node [below=0.3 of aboveleftbelowxV] (aboveleftbelowxV_V){$\{V_v^x\}_v$};
    \node [below=0.3cm of xV] (belowxVbefore){};
    \node [below=1.6cm of xV] (Vans1){};
    \node [below=2.2cm of xV] (Vans2){};
    \node [below=2.3cm of xV] (Vans3){$v$};
    \draw [->] (Vans1) -- (Vans2);

    \node [below=of emptyx] (belowemptyx){};
    \node [below=0.9cm of x1] (belowx1){};
    \node [left=0.0 of belowx1] (leftbelowx1){};
    \node [below=0.3 of x1] (belowemptyx1){$A$};
    \node [below=0.3 of belowemptyx1] (MeasAlice){$\{A_a^x\}_a$};
    \node [above=0.1 of leftbelowx1] (aboveleftbelowx1){};
    \node [below=0.3cm of x1] (belowx1before){};
    \node [below=1.6cm of x1] (Aans1){};
    \node [below=2.2cm of x1] (Aans2){};
    \node [below=2.3cm of x1] (Aans3){$a$};
    \draw [->] (Aans1) -- (Aans2);

    %\node [below=1.2cm of emptyx1] (belowemptyx1){$\ldots$};
    \node [below=0.9cm of  x2] (belowx2){};
    \node [left=0.5 of belowx2] (leftbelowx2){};
    \node [below=0.3 of x2] (belowemptyx2){$B$};
    \node [below=0.3 of belowemptyx2] (MeasBob){$\{B_b^x\}_b$};
    \node [above=0.1 of leftbelowx2] (aboveleftbelowx2){};
    \node [below=0.3cm of x2] (belowx2before){};
    \node [below=1.6cm of x2] (Bans1){};
    \node [below=2.2cm of x2] (Bans2){};
    \node [below=2.3cm of x2] (Bans3){$b$};
    \draw [->] (Bans1) -- (Bans2);

    %\node [below=1.2 cmof emptyx2] (belowemptyx2){$\ldots$};
    %\node [below=of x3] (belowx3){$A_{k-1}$};
    %\node [below=0.6cm of x3] (belowx3before){};

     \node [below=0.6cm of belowxV] (AlicebelowxV){};
     \node [below=0.6cm of belowx1] (Alicebelowx1){};
     %\node [below=2.2cm of emptyx] (AliceState){$\mathop{\otimes}\limits_{x_i:x_i=0}\ket{\Phi^+}_{R_{x_i}A_{x_i}}$};
     
     %\draw [-, transform canvas={xshift=0pt, yshift = 0 pt}, quantum] (AlicebelowxV) -- (Alicebelowx1) {};
     
     \node [below=1.7cm of belowxV] (2belowxV){};
     \node [below=1.7cm of belowx2] (2belowx2){};
     %\node [below=2cm of belowx1] (2belowx1){$\mathop{\otimes}\limits_{x_i:x_i=1}\ket{\Phi^+}_{R_{x_i}B_{x_i}}$};

     \node[mark size=2.5pt,color=black] at (belowxV) {\pgfuseplotmark{*}};
     \node[mark size=2.5pt,color=black] at (belowx1) {\pgfuseplotmark{*}};
     \node[mark size=2.5pt,color=black] at (belowx2) {\pgfuseplotmark{*}};

     \node [right=1.8cm of belowxV] (rho){};
     \node [above=0.0cm of rho] (rho1){$\rho_{RAB}$};

     \newcommand\Square[1]{+(-#1,-#1) rectangle +(#1,#1)}
     \draw (belowxV) \Square{20pt} ; 
     \draw (belowx1) \Square{20pt} ;
     \draw (belowx2) \Square{20pt} ;
     %\draw (belowx3) \Square{15pt} ;

     \draw [->] (xV) -- (belowxVbefore);
     \draw [->] (x1) -- (belowx1before);
     \draw [->] (x2) -- (belowx2before);
     %\draw [->] (x3) -- (belowx3before);

     \draw [-, transform canvas={xshift=0pt, yshift = 0 pt}, quantum] (belowxV) -- (belowx2) {};

\end{tikzpicture}
\caption{Schematic representation of an extended non-local game. The undulated line represents the tripartite quantum state $\rho_{RAB}$ shared amongst the three parties.  }
\label{fig:game}
\end{figure}

\begin{definition} A \emph{monogamy-of-entanglement game} \cite{TomamichelMonogamyGame2013} is an extended non-local game where $x=y=z$, i.e.\ all the parties get the same question, and the referee performs the measurement $\{V^x_a\}$. The winning set is given by $W=\{(x,v,a,b)\mid v=a=b \}$. Then, a monogamy-of-entanglement game will be denoted by the tuple
\begin{equation}
    \mathcal G=(q, V).
\end{equation}
\end{definition}

To illustrate the concept of extended non-local games, consider the following example introduced in \cite{TomamichelMonogamyGame2013}.

\begin{ex} \label{ex:BB84MoEGame} The \emph{BB84 monogamy-of-entanglement game} is described as follows. Alice and Bob prepare a quantum state and give a qubit to the referee, who, uniformly at random, choses to measure it either in the computational or the Hadamard basis. Upon knowing the choice of basis, the task of Alice and Bob is to guess the measurement outcome. Using the above terminology, the game is given by
\begin{equation}
    \mathcal{G}^{BB84}=\left(q(x)=\frac{1}{\abs{\mathcal{X}}},V=\{V^x_a=H^x\ketbra{a}{a}H^x\}_{x,a}, W=\{(x,v,a,b)\mid v=a=b\}_{x,v,a,b}\right),
\end{equation}
with $\mathcal X=\mathcal V=\mathcal A=\mathcal B=\{0,1\}$.  
\end{ex}

The most general thing that Alice and Bob can do using quantum mechanics is to perform POVMs $\{A^x_a\}$ and $\{B^y_b\}$ on their local registers to answer $a$ and $b$, respectively. A \emph{quantum strategy} $S_Q$ is defined by the tuple $\{\rho_{RAB},A^x_a,B^y_b\}_{x,y,a,b}$, and the quantum winning probability of the extended non-local game $\mathcal{G}$ is defined as 
\begin{equation}\label{eq:w_Q}
    \omega_{Q}(\mathcal{G}):=\sup\sum_{(z,x,y,v,a,b)\in W}q(z,x,y)\tr{(V^{z}_{v}\otimes A^x_a\otimes B^y_b) \rho_{RAB}},
\end{equation}
where the supremum (sup) is taken over all quantum states $\rho_{RAB}$, all POVMs $\{A^x_a\}$ and $\{B^y_b\}$ over all possible dimensions of $\mathcal{H}_A$ and $\mathcal{H}_B$. All the supremums in this work will be taken over the same conditions. A particular case of quantum strategies that will appear in this work are the \emph{unentangled} strategies. These correspond to quantum strategies $S_Q$ for which Alice and Bob prepare an unentangled initial state, i.e.\ of the form ${\rho_{RAB}=\sum_\lambda p_\lambda\rho_R^\lambda\otimes\rho_A^\lambda\otimes\rho_B^\lambda,}$ where $p_\lambda\geq 0$ are such that $\sum_\lambda p_\lambda=1$. Notice that the winning probability of unentangled strategies can be attained by Alice and Bob acting only classically (using shared randomness), since the following holds:
\begin{equation}
\begin{split}\label{eq:shared_randomness}
    \tr{(V^{z}_{v}\otimes A^x_a\otimes B^y_b) \rho_{RAB}}=\sum_\lambda p_\lambda\tr{V^{z}_{v}\rho_R^\lambda}\tr{ A^x_a\rho_A^\lambda}\tr{B^y_b\rho_B^\lambda}.
\end{split}
\end{equation}
Therefore, we will refer to unentangled strategies as those where Alice and Bob send a quantum state to the referee but only perform local operations based on shared classical randomness, reducing to $\rho_{RAB}=\rho_R$. Note that, since in this case the winning probability will be given by a convex combination, the optimal values will be obtained by extremal points and thus, by the proper choice of a pure state $\ketbra{\phi}{\phi}_R$, we have that 
\begin{equation*}
\begin{split}
     \sum_{(z,x,y,v,a,b)\in W}\!\!\!\!\!\!\!\!\!\!q(z,x,y)\sum_\lambda p_\lambda\tr{V^{z}_{v}\rho_R^\lambda}\tr{ A^x_a\rho_A^\lambda}\tr{B^y_b\rho_B^\lambda}\leq\!\!\!\!\!\!\!\sum_{(z,x,y,v,a,b)\in W}\!\!\!\!\!\!\!\!\!\!q(z,x,y)\tr{V^z_v\ketbra{\phi}{\phi}}p(a|x)p(b|y),
\end{split}
\end{equation*}
where $p(a|x)$ and $p(b|y)$ are the (classical) probabilities to output $a$ and $b$, given $x$ and $y$, respectively. For $\mathcal{G}^{BB84}$, it was shown~\cite{TomamichelMonogamyGame2013} that $\omega_Q(\mathcal{G}^{BB84})=\cos^2\left(\frac{\pi}{8}\right)$, which is attained by the unentangled strategy $S_{Q}^{BB84}=\{\ketbra{\phi}{\phi},A_a^x=\delta_{a0},B_a^x=\delta_{a0}\}$, where $\ket{\phi}=\cos\frac{\pi}{8}\ket{0}+\sin\frac{\pi}{8}\ket{1}$.

A broader class of strategies, encompassing quantum strategies as a special case, is given by the so-called \emph{commuting strategies} which consist of POVMs $\{A^x_a\}$ and $\{B^y_b\}$ on Alice's and Bob's joint system $\mathcal H$ such that $[A^x_a,B^y_b]=0$, i.e.~they commute, and a joint tripartite state $\rho_{RAB}$. The respective commuting winning probability of a non-local game $\mathcal{G}$ is defined as 
\begin{equation}\label{eq:w_comm}
    \omega_{comm}(\mathcal{G}):=\sup\sum_{(z,x,y,v,a,b)\in W}q(z,x,y)\tr{(V^{z}_{v}\otimes A^x_a B^y_b) \rho_{RAB}}.
\end{equation}

By extending Alice's and Bob's Hilbert spaces, if necessary, $\rho_{RAB}$ can be taken as a pure state, and $\{A^x_a\}$ and $\{B^y_b\}$ projective measurements. Since the quantum strategies are a subset of the commuting strategies, we have that, for all extended non-local games $\mathcal G$,
\begin{equation}
     \omega_{Q}(\mathcal{G})\leq  \omega_{comm}(\mathcal{G}).
\end{equation}

A particularly interesting case arises when an extended non-local game is sequentially played many times so that one can extract statistics from the answers of the players. 
Such a case arises, for example, when monogamy-of-entanglement games are used to analyze security of certain quantum protocols \cite{TomamichelMonogamyGame2013}, where the two collaborative parties play the role of attackers/dishonest implementers of the protocol who want to emulate the behavior of an honest party towards the referee. 

In a realistic implementation of those protocols, the honest party will not perform them with perfect correctness, and one might expect a certain distribution over the possible outcomes. For example, there will be answers that will be wrong with a certain probability due to e.g.\ experimental measurement errors or noisy quantum channels. However, we can find other types of answers that are wrong and that are simply inconsistent with the behavior of an honest executor of the protocol, for instance if she broadcasts classical information and this information is received differently in different locations. Moreover, in communication tasks, a sizable fraction of qubits get lost and, therefore, we expect a certain ratio of `empty' answers ($\perp$). Thus, it is natural to ask whether Alice and Bob's winning probability of a certain extended non-local game decreases if they not only have to answer correctly, but they have to emulate  the distribution of the other (different) incorrect answers and the transmission rate. In order to analyze these games where Alice and Bob are expected to answer certain answers with a certain probability, we introduce the concept of \emph{lossy-and-constrained} extended non-local games.  To clarify these ideas, we now present an example before introducing the formal definitions of constrained and lossy extended non-local games.

\begin{ex} Consider the $\mathcal G^{BB84}$ game, described in Example~\ref{ex:BB84MoEGame}.
\begin{enumerate}
    \item \emph{Constrained version (see Section~\ref{section:cons_BB84} for more details):} Motivated by security of quantum cryptographic protocols, this game is used to reduce their security where Alice and Bob pretend to mimic the action of an honest prover in a given protocol, see Section~\ref{sec:applications} for more details. In such a case, Alice and Bob have to coordinate their actions, and a natural imposition is given by forbidding them to answer differently. This is done by imposing that for every  strategy $S_{Q}=\{\rho_{RAB},A^x_a,B^y_b\}_{x,y,a,b}$,
\begin{equation}
    \sum_{x,a,a'\neq b'}q(x)\tr{(V^{x}_{a}\otimes A^x_{a'} \otimes B^x_{b'}) \rho_{RAB}}=0.
\end{equation}
\item \emph{Lossy version  (see Section~\ref{section:lossy_moe_games} for more details):} Motivated by losses in quantum communication, assume that Alice and Bob have the option to answer "inconclusive answer" ($\perp$) with probability $1-\eta$. This can be viewed as Alice and Bob claiming at will that the quantum messages have been lost, even if that was not the case. In this scenario, if they have to coordinately mimic a response rate $\eta$, this can be done by imposing that for every  strategy $S_{Q}=\{\rho_{RAB},A^x_a,B^y_b\}_{x,y,a,b}$,
\begin{equation} 
        \sum_{x}q(x)\tr{(\mathbb{I}_R\otimes A^x_{\perp}\otimes B^y_{\perp}) \rho_{RAB}}=1-\eta. 
    \end{equation}
\end{enumerate}

\end{ex}

\noindent We now formalize these concepts with the following definitions.

\begin{definition} Let $L\in\mathbb N$, for all $\ell \in\{1,\dots,L\}$, let $C=\{\alpha_{\ell}(v,a,b|z,x,y),c_{\ell}^{0},c_{\ell}^1\}_{(\ell,z,x,y,v,a,b)}$ (set of constraints) for $\alpha_{\ell}:\mathcal Z\times\mathcal{X}\times\mathcal{Y}\times\mathcal V\times\mathcal{A}\times\mathcal{B}\rightarrow \mathbb R$ and let $c_{\ell}^{0},c_{\ell}^1\in\mathbb R$ be such that $c_{\ell}^{0}\leq c_{\ell}^1$. 
We say that an extended non-local game is \emph{constrained}, and we denote it by $\mathcal{G}_{C}$, if for every strategy $S_{Q}=\{\rho_{RAB},A^x_a,B^y_b\}_{x,y,a,b}$,
\begin{equation}
    c_{\ell}^{0}\leq\sum_{z,x,y,v,a,b}\alpha_{\ell}(v,a,b | z,x,y)\tr{(V^{z}_{v}\otimes A^x_{a} \otimes B^y_{b}) \rho_{RAB}}\leq c_{\ell}^{1} \hspace{3mm}\forall\ell\in\{1,\dots,L\}.
\end{equation}
\end{definition}
\noindent See Section \ref{section:concrete_games} for examples of how to encode other problems as such a game.

\begin{definition}
    We say that an extended non-local game $\mathcal{G}$ is \emph{lossy} with parameter $\eta\in[0,1]$, and we denote it by $\mathcal{G}_{\eta}$, if the players are allowed to respond $a\in\mathcal{A}\cup\{\perp\}$ and $b\in\mathcal{B}\cup\{\perp\}$ in such a way that for every strategy  $S_{Q}=\{\rho_{RAB},A^x_a,B^y_b\}_{x,y,a,b}$, for $a\in\mathcal{A}\cup\{\perp\}$ and $b\in\mathcal{B}\cup\{\perp\}$,
    \begin{equation} \label{eq:lossy}
        \sum_{x,y}q(x,y)\tr{(\mathbb{I}_R\otimes A^x_{\perp}\otimes B^y_{\perp}) \rho_{RAB}}=1-\eta. 
    \end{equation}
\end{definition}

See more examples in Section~\ref{section:concrete_games}. We define a lossy extended non-local game by imposing that the average probability of the answers `$\perp$' is $1-\eta$. We stress that depending on the case to be considered, one can vary the definition and (i) impose that the probability for every input $x,y$ is $1-\eta$, i.e., $\tr{(\mathbb{I}_R\otimes A^x_{\perp} B^y_{\perp}) \rho_{RAB}}=1-\eta$ $\forall x,y\in \mathcal{X}\times\mathcal Y$ instead of on average, or (ii) impose different loss rates for Alice and Bob, which includes the case where one party answers $\perp$, while the other party does not. Although we stick to the above definition, the results presented in this work still apply if these alternative definitions are considered. 

\begin{remark} \label{rem:guess}\emph{(The guessing strategy)}. Consider a monogamy-of-entanglement game where the referee performs projective measurements $\{V^x_v\}$. If the distribution of the questions $x\in\mathcal X$ is uniform,  Alice and Bob can make a guess $\Tilde{x}$ for $x$ that will be correct with probability $\frac{1}{\abs{\mathcal X}}$. They can pick a fixed answer $a\neq \perp$ and send a state $\ket\psi\in\mathcal{H}_R$ that lives in the subspace that $V^{\Tilde{x}}_a$ projects onto, i.e.~$V^{\Tilde{x}}_a\ket{\psi}=\ket{\psi}$. Then, if they receive $x=\Tilde{x}$, they answer $a$ and otherwise, they answer $\perp$. This strategy based on guessing $x$ is such that Alice and Bob are going to be correct with probability~$\frac{1}{\abs{\mathcal X}}$ and will answer $\perp$ with probability~$1-\frac{1}{\abs{\mathcal X}}$; they will never give a wrong answer. 

If the distribution is not uniform, then Alice and Bob guess $\Tilde{x}$ with the highest probability, which will be at least $\frac{1}{\abs{\mathcal X}}$, and then they will be correct with probability at least $\frac{1}{\abs{\mathcal X}}$. 
    
\end{remark}

Because of the general strategy described in Remark~\ref{rem:guess}, throughout this paper we will consider the range $\eta\geq\frac{1}{\abs{\mathcal X}}$, unless explicitly stated otherwise.

If an extended non-local game $\mathcal G$ is both lossy and constrained, we say that it is  \emph{lossy-and-constrained}, and we denote it by $\mathcal{G}_{C,\eta}$. The quantum and commuting winning probabilities of $\mathcal{G}_{C,\eta}$ will be denoted by $\omega_{Q}(\mathcal{G}_{C,\eta})$ and $\omega_{comm}(\mathcal{G}_{C,\eta})$, respectively.

\subsection{Convergence of $\omega_{comm}(\mathcal{G}_{C,\eta})$ via SDPs}\label{section:convergence_proof}

In \cite{Extended_non-local_games_andMoE_games}, it was shown that similar ideas as in \cite{NPA2008} could be applied to construct a hierarchy of SDPs that upper bound the average winning probability of a given extended non-local game, which converges to its optimal value. Here, we show that when slightly modifying these games, by considering their lossy-and-constrained versions, the results still apply. For completeness, in this section we reproduce the proof in \cite{Extended_non-local_games_andMoE_games} to show that it applies in our case whenever the games that are considered are modified by their constraints and the loss.

Following Section 3 in \cite{Extended_non-local_games_andMoE_games}, consider the set \begin{equation}
    \Sigma=(\mathcal{X}\times\mathcal{A})\sqcup(\mathcal{Y}\times\mathcal{B}),
\end{equation}
where $\sqcup$ denoted the disjoint union. The set of strings of length at most $k$ over the alphabet $\Sigma$ will be denoted by $\Sigma^{\leq k}$, the set of all strings of finite length over $\Sigma$ will be denoted by $\Sigma^*$. Finally, the reverse of a string $s$ will be denoted by $s^R$, and the empty string will be denoted by $\varepsilon$. 
For example, 
\begin{equation}
\begin{split}
    \Sigma^{\leq 0}&=\{\varepsilon\},\\
    \Sigma^{\leq 1}&=\Sigma^{\leq0}\cup\{(x,a),(y,b)\}_{(x,a,y,b)\in \mathcal{X}\times\mathcal{A}\times\mathcal{Y}\times\mathcal{B}},\\
    \Sigma^{\leq 2}&= \Sigma^{\leq 1}\cup\{(x,a,x',a'),(y,b,y',b'),(x,a,y,b),(y,b,x,a)\}_{(x,x',a,a',y,y',b,b')\in \mathcal{X}^{\times2}\times\mathcal{A}^{\times2}\times\mathcal{Y}^{\times2}\times\mathcal{B}^{\times2}}.
\end{split}
\end{equation}
For later purposes, we will consider the set defined for $k=`1+AB$', which is given by 
\begin{equation}
    \Sigma^{\leq {`1+AB\text{'}}}= \Sigma^{\leq 1}\cup\{(x,a,y,b)\}_{(x,a,y,b)\in \mathcal{X}\times\mathcal{A}\times\mathcal{Y}\times\mathcal{B}}\subset\Sigma^{\leq 2}.
\end{equation}
Consider the equivalence relation  $\sim$ on $\Sigma^*$ defined by:
\begin{enumerate}
    \item For every $s,t\in\Sigma^*$, and $\sigma$ in $\Sigma$, $s\sigma t\sim s \sigma \sigma t$, 
    \item For every $s,t\in\Sigma^*$, $\sigma_A\in\mathcal{X}\times\mathcal{A}$, and  $\sigma_B\in\mathcal{Y}\times\mathcal{B}$, $s\sigma_A\sigma_B t\sim s\sigma_B\sigma_A t$.
\end{enumerate}

\begin{definition} \emph{(Admissible function \cite{Extended_non-local_games_andMoE_games})}. A function $\phi:\Sigma^*\rightarrow\mathbb C$ is said to be \emph{admissible} if and only if the following conditions are satisfied:
\begin{enumerate}
    \item For every $s,t\in\Sigma^*$, and every $(x,y)\in\mathcal{X}\times\mathcal{Y}$, 
    \begin{equation} \label{eq:admissible1}
        \sum_{a\in\mathcal{A}}\phi(s(x,a)t)=\phi(s,t)= \sum_{b\in\mathcal{B}}\phi(s(y,b)t).
    \end{equation}
    \item For every $s,t\in\Sigma^*$, every $(x,y)\in\mathcal{X}\times\mathcal{Y}$, and every $a\neq a'\in\mathcal{A}$, $b\neq b'\in\mathcal{B}$,
    \begin{equation}\label{eq:admissible2}
        \phi(s(x,a)(x,a')t)=0=\phi(s(y,b)(y,b')t).
    \end{equation}
    \item For every $s,t\in\Sigma^*$ such that $s\sim t$, 
    \begin{equation}\label{eq:admissible3}
        \phi(s)=\phi(t).
    \end{equation}
A function $\phi:\Sigma^{\leq k}\rightarrow\mathbb C$ is said to be admissible if and only if \eqref{eq:admissible1}-\eqref{eq:admissible3} hold, given that the arguments' length is such that $\phi$ is defined. 
\end{enumerate}
    
\end{definition}
We extend the definition of $k$th order admissible matrices in \cite{Extended_non-local_games_andMoE_games}. 

\begin{definition}
    For every $1\leq k\in\mathbb Z$, let $M^{(k)}$ be a block matrix of the form 
\begin{equation}\label{eq:M^k}
    M^{(k)} = \begin{pmatrix}
        M_{1,1}^{(k)} & \cdots & M_{1,m}^{(k)} \\
        \vdots & \ddots & \vdots \\
        M_{m,1}^{(k)} & \cdots & M_{m,m}^{(k)}
    \end{pmatrix}
\end{equation}
where $ M_{i,j}^{(k)}:\Sigma^{\leq k}\times\Sigma^{\leq k}\rightarrow\mathbb C$ for every $i,j\in\{1,\ldots,m\}$. The matrix $M^{k}(s,t)$ is defined as the matrix with entries $M^{(k)}_{i,j}(s,t)$. 
We say that $M^{(k)}$ is a lossy-and-extended $k$th order admissible matrix for $\mathcal{G}_{C,\eta}$ with constraints $C=\{\alpha_{\ell}(v,a,b|z,x,y),c_{\ell}^{0},c_{\ell}^1\}_{(\ell,z,x,y,v,a,b)}$, if the following conditions are satisfied:
\begin{enumerate}
    \item For every $i,j\in\{1,\ldots,m\}$, there exists an admissible function $\phi_{i,j}:\Sigma^{2k}\rightarrow\mathbb C$ such that for every $s,t\in\Sigma^{\leq k}$, 
    \begin{equation} \label{eq:lin_cons1}
        M^{(k)}_{i,j}(s,t)=\phi_{i,j}(s^Rt).
    \end{equation}
    \item The following holds
    \begin{equation}\label{eq:trM=1}
        \tr{M^{(k)}(\varepsilon,\varepsilon)}=1.
    \end{equation}
    \item For all $\ell\in\{1,\dots,L\}$, 
    \begin{equation} \label{eq:tr_cons}
        c_{\ell}^{0}\leq\sum_{z,x,y,v,a,b}\alpha_{\ell}(v,a,b|z,x,y)\tr{V^{z}_{v}M^{(k)}((x,a),(y,b))}\leq c_{\ell}^{1},
    \end{equation}
    and \begin{equation}
        \sum_{x,y}q(x,y)\tr{M^{(k)}((x,\perp),(y,\perp))}=1-\eta. \label{eq:tr_eta}
    \end{equation}
    \item The matrix $M^{(k)}$ is positive semidefinite. 
\end{enumerate}
\end{definition}

The definition of a $k$th order admissible matrix defined in \cite{Extended_non-local_games_andMoE_games} is the same, except without condition 3. 

For a lossy-and-extended $k$th order admissible matrix for $\mathcal{G}_{C,\eta}$ with constraints\\ $C=\{\alpha_{\ell}(z,x,y|v,a,b),c_{\ell}^{0},c_{\ell}^1\}_{(\ell,z,x,y,v,a,b)}$, consider 
\begin{equation}
    \omega_{admiss}^{(k)}(\mathcal{\mathcal{G}_{C,\eta}}):=\sup\sum_{(z,x,y,v,a,b)\in W}q(z,x,y)\tr{V^{z}_{v}M^{(k)}((x,a),(y,b))}.
\end{equation}
Here the supremum is taken over all the strategies that fulfill the constraints given by $C$ and are consistent with $\eta$ in the sense of \eqref{eq:lossy}. 
\begin{remark}
    The value $\omega_{admiss}^{(k)}(\mathcal{\mathcal{G}_{C,\eta}})$ can be computed by the following SDP:
    \begin{equation}\label{eq:SDP}
    \begin{split}
         &\max \sum_{(z,x,y,v,a,b)\in W}q(z,x,y)\tr{V^{z}_{v}M^{(k)}((x,a),(y,b))}\\
         &\text{subject to the linear constraints given by \eqref{eq:lin_cons1}-\eqref{eq:tr_eta},}\\
        & \hspace{0cm} \text{ and } M^{(k)}\succeq0. 
    \end{split}
    \end{equation}

\end{remark}
Recall that semidefinite programs (SDPs) can be solved in polynomial time to any fixed prescribed precision, see e.g.\ \cite{LAURENT2005393,Boyd_Vandenberghe_2004}.

\begin{lemma}\label{lemma:upper_bound} For all $ k\geq  k'\geq 1$,
   \begin{equation}
   \omega_{admiss}^{(k)}(\mathcal{\mathcal{G}_{C,\eta}}) \geq \omega_{admiss}^{(k')}(\mathcal{\mathcal{G}_{C,\eta}})\geq \omega_{comm}(\mathcal{\mathcal{G}_{C,\eta}}).
\end{equation} 
\end{lemma}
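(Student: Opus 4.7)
The plan is to adapt the argument of \cite{Extended_non-local_games_andMoE_games} to the lossy-and-constrained setting, the new content being to verify that the additional linear conditions \eqref{eq:tr_cons} and \eqref{eq:tr_eta} behave well under the two constructions the proof uses: restriction of higher-order admissible matrices to lower-order ones, and construction of a $k$-th order admissible matrix from an arbitrary commuting strategy.

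For the monotonicity inequality, starting from a feasible $k$-th order admissible matrix $M^{(k)}$, I would form the principal submatrix $M^{(k')}$ obtained by keeping only rows and columns indexed by $\Sigma^{\leq k'}$ inside each of the $m^2$ blocks of \eqref{eq:M^k}. Positive semidefiniteness is inherited from $M^{(k)}$; condition \eqref{eq:lin_cons1} descends by restricting each admissible function $\phi_{i,j}$ to strings of length at most $2k'$ (and one checks that the restriction of an admissible function is admissible); the trace condition \eqref{eq:trM=1} only involves the $(\varepsilon,\varepsilon)$ entry; and, crucially, the new constraints \eqref{eq:tr_cons} and \eqref{eq:tr_eta} only involve entries of $M^{(k)}$ indexed by length-$1$ strings, which remain in $\Sigma^{\leq k'}$ for every $k'\geq 1$. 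Hence the restriction is feasible with identical numerical values and identical objective.

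For the bound on $\omega_{comm}$, let $(\ket{\psi}_{RAB},\{A^x_a\},\{B^y_b\})$ be a commuting strategy, taken WLOG to be pure and projective with $[A^x_a,B^y_b]=0$. Decompose $\ket{\psi}=\sum_{i=1}^m \ket{i}_R\otimes\ket{\psi_i}_{AB}$ in a fixed orthonormal basis of $\mathcal{H}_R$. For each $s\in\Sigma^*$ let $\Pi_s\in\mathcal{B}(\mathcal{H}_A\otimes\mathcal{H}_B)$ denote the ordered product of the $A$- and $B$-projectors indicated by $s$; idempotence of projectors and the commutation hypothesis ensure that $\Pi_s$ depends only on the $\sim$-equivalence class of $s$. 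I would define
\begin{equation}
    M^{(k)}_{i,j}(s,t) := \bra{\psi_j}\Pi_s^\dagger\Pi_t\ket{\psi_i} = \bra{\psi_j}\Pi_{s^R t}\ket{\psi_i},
\end{equation}
and $\phi_{i,j}(u):=\bra{\psi_j}\Pi_u\ket{\psi_i}$. Then $M^{(k)}$ is (the transpose of) the Gram matrix of the vectors $\{\Pi_s\ket{\psi_i}\}_{s,i}$, hence PSD; \eqref{eq:lin_cons1} is immediate; and the three admissibility conditions \eqref{eq:admissible1}--\eqref{eq:admissible3} follow respectively from $\sum_a A^x_a=\mathbb{I}$, from $A^x_a A^x_{a'}=0$ for $a\neq a'$, and from $\sim$-invariance of $\Pi_{\cdot}$. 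The normalization $\tr{M^{(k)}(\varepsilon,\varepsilon)}=\braket{\psi}{\psi}=1$ is immediate, and a direct computation gives $\tr{V^z_v\,M^{(k)}((x,a),(y,b))}=\bra{\psi}(V^z_v\otimes A^x_a B^y_b)\ket{\psi}$ and $\tr{M^{(k)}((x,\perp),(y,\perp))}=\bra{\psi}(\mathbb{I}_R\otimes A^x_\perp B^y_\perp)\ket{\psi}$, so \eqref{eq:tr_cons} and \eqref{eq:tr_eta} reduce to the corresponding constraints on the commuting strategy, which hold by hypothesis; the objective matches the commuting winning probability. Taking the supremum gives the second inequality.

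The genuinely new step compared with \cite{Extended_non-local_games_andMoE_games} is the transfer of \eqref{eq:tr_cons} and \eqref{eq:tr_eta} in both directions; since these are linear conditions on matrix entries at length-$1$ index positions, they mesh cleanly with both the principal-submatrix restriction and the Gram-matrix construction, and so the main obstacle is careful bookkeeping --- checking that the length-$1$ entries of $M^{(k)}$ really do match the commuting-strategy quantities --- rather than any new conceptual ingredient.
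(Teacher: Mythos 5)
Your proof is correct and takes essentially the same approach as the paper's: the Gram-matrix construction from a commuting strategy (you use a generic basis decomposition of $\ket{\psi}$ where the paper uses a Schmidt decomposition, but both reduce to the same unnormalised slice vectors $\ket{\Tilde\psi_i}$), and the principal-submatrix restriction for relating different levels. Your explicit check that \eqref{eq:tr_cons} and \eqref{eq:tr_eta} transfer cleanly in both directions — because they touch only length-$1$ entries — is exactly the new content over \cite{Extended_non-local_games_andMoE_games} and is handled correctly.

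One substantive remark on the direction of the monotonicity inequality. Your restriction construction takes a feasible $M^{(k)}$ to a feasible $M^{(k')}$ with the same objective value, which shows that every $k$-feasible objective value is $k'$-feasible; taking suprema gives $\omega_{admiss}^{(k)}(\mathcal{G}_{C,\eta}) \leq \omega_{admiss}^{(k')}(\mathcal{G}_{C,\eta})$ for $k\geq k'$, i.e.\ the sequence is \emph{non-increasing}. This is the opposite of the inequality as literally written in the lemma, but it is the intended direction: the proof of Theorem~\ref{thm:convergence} explicitly invokes ``a non-increasing sequence,'' and convergence from above to $\omega_{comm}$ is impossible otherwise. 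So the lemma statement as written contains a sign slip, and your argument establishes what the paper actually needs.
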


\begin{proof} The function $K$ defined as $K(a,b|x,y)=\ptr{AB}{(\mathbb I_{R}\otimes A^x_a B^y_b)\rho_{RAB}}$ is said to be a \emph{commuting measurement assemblage}. Then,  \eqref{eq:w_Q} can be rewritten as  \begin{equation}      \omega_{comm}(\mathcal{G})=\sup\sum_{(z,x,y,v,a,b)\in W}q(z,x,y)\tr{V^{z}_{v}K(a,b|x,y)}. \end{equation}
We will show that given a commuting measurement assemblage $K$, for every integer $k\geq 1$, there exists a lossy-and-extended $k$th order admissible matrix for $\mathcal{G}_{C,\eta}$ with constraints $C$. 

Consider a commuting strategy $S_{comm}=\{\ket{\psi},A^x_a,B^y_b\}$ for $\mathcal{G}_{C,\eta}$, with $\ket{\psi}\in\mathcal H_R\otimes \mathcal{H}_A\otimes \mathcal{H}_B$  and $A^x_a,B^y_b$ projective measurements (recall that a strategy can be taken as a pure state and projective measurements). Consider a Schmidt decomposition of $\ket\psi$ given by $\ket\psi=\sum_{i=1}^m\lambda_i\ket{i}_{R}\ket{\psi_i}_{AB}$, where $\{\ket{i}\}$ is the standard basis of $\mathcal{H}_R$ and $\ket{\psi_i}_{AB}\in\mathcal{H}_A\otimes\mathcal{H}_B$. Let $\ket{\Tilde{\psi}_i}:=\lambda_i\ket{\psi_i}$.

Define 
\begin{equation}
    \Pi^z_c:=\begin{cases}
        A^z_c \text{ if } (c,z)\in\mathcal{A}\times\mathcal{X},\\
        B^z_c \text{ if } (c,z)\in\mathcal{B}\times\mathcal{Y}.
        
    \end{cases}
\end{equation}
Consider $M^{(k)}$ defined by 
\begin{equation}
    M_{i,j}^{(k)}(s,t)=\phi_{i,j}(s^Rt) \text{ } \forall i,j\in\{1,\ldots, m\},
\end{equation}
where $\phi_{i,j}$ are defined as 
\begin{equation}
    \phi_{i,j}((z_1,c_1)\dots(z_l,c_l))=\bra{\Tilde\psi_j}\Pi^{z_1}_{c_1}\cdots\Pi^{z_l}_{c_l}\ket{\Tilde\psi_i}.
\end{equation}
The functions $\phi_{i,j}$ fulfill \eqref{eq:admissible1}-\eqref{eq:admissible3}, and thus they are admissible functions. 
Notice that
\begin{equation}
\begin{split}
    1&=\tr{\ketbra{\psi}{\psi}}=\tr{\sum_{i,j=1}^m\ket{i}\ket{\Tilde\psi_i}\bra{\Tilde\psi_j}\bra{j}}=
    \sum_{i,j=1}^m\braket{i}{j}\braket{\Tilde\psi_i}{\Tilde\psi_j}\\&=\sum_{i=1}^m\braket{\Tilde\psi_i}{\Tilde\psi_i}
    =\sum_{i=1}^m\phi_{i,i}(\varepsilon,\varepsilon)=\sum_{i=1}^mM_{i,i}^{(k)}(\varepsilon,\varepsilon)=\tr{M^{k}(\varepsilon,\varepsilon)},
    \end{split}
\end{equation}
and therefore \eqref{eq:trM=1} is fulfilled. Moreover, $S_{comm}$ is such that equations \eqref{eq:tr_cons} and \eqref{eq:tr_eta} hold. Finally, since $M^{(k)}$ is a Gram matrix by construction, it is positive semidefinite. Thus, $M^{(k)}$ is a lossy-and-extended $k$th order admissible matrix for $\mathcal{G}_{C,\eta}$ with constraints $C$. 

It remains to see that $M^{(k)}$ corresponds to $K$. Since
\begin{equation}
\begin{split}
    K&(a,b|x,y)=\ptr{AB}{(\mathbb I_R\otimes A^x_aB^y_b)\sum_{i,j=1}^{m}\lambda_i\lambda_j\ket{j}\ket{\psi_j}\bra{\psi_i}\bra{i})}\\&=\sum_{k=1}^{\dim\mathcal{H}}\bra{\psi_k}(\mathbb I_R\otimes A^x_aB^y_b)\sum_{i,j=1}^{m}\lambda_i\lambda_j\ket{j}\ket{\psi_j}\bra{\psi_i}\bra{i}) \ket{\psi_k}=\sum_{i,j=1}^{m}\ketbra{i}{j}\lambda_j\bra{\psi_j}A^x_aB^y_b\ket{\psi_i}\lambda_i\\&=\sum_{i,j=1}^{m}\ketbra{i}{j}\bra{\Tilde\psi_j}A^x_aB^y_b\ket{\Tilde\psi_i}=\sum_{i,j=1}^{m}\ketbra{i}{j}M^{(k)}_{i,j}((x,a),(y,b)),
    \end{split}
\end{equation}
we have that $K(a,b|x,y)=M^{(k)}((x,a),(y,b))$. 

Finally, the inequality $\omega_{admiss}^{(k)}(\mathcal{\mathcal{G}_{C,\eta}}) \geq \omega_{admiss}^{(k')}$ for $k\geq k'$  comes from the fact that both are obtained with an SDP with the same objective function but the latter is obtained with more constraints.
\end{proof}

\begin{lemma} \label{lemma:M<=1}(Lemma 5.2. in \cite{russo2017extended}). Let $M^{(k)}$ be as in \eqref{eq:M^k}. Then, for all $i,j\in\{1,\ldots,m\}, s,t\in\Sigma^{\leq k}$,
\begin{equation}
    \abs{M_{i,j}^{(k)}(s,t)}\leq 1.
\end{equation}
\end{lemma}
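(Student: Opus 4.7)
The plan is to exploit positive semidefiniteness of $M^{(k)}$ to reduce the off-diagonal bound to a diagonal bound, and then bound the diagonal entries by $M_{i,i}^{(k)}(\varepsilon,\varepsilon)$ via an induction driven by the admissibility conditions and the equivalence relation $\sim$. Concretely, I first apply the standard Cauchy--Schwarz inequality for PSD matrices to the $2\times 2$ principal submatrix of $M^{(k)}$ indexed by $(i,s)$ and $(j,t)$, obtaining $|M_{i,j}^{(k)}(s,t)|^{2} \leq M_{i,i}^{(k)}(s,s)\cdot M_{j,j}^{(k)}(t,t)$. This reduces the problem to showing $M_{i,i}^{(k)}(s,s) \leq 1$ for all $i$ and all $s\in\Sigma^{\leq k}$.

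For the diagonal bound I induct on the length of $s$ to prove $M_{i,i}^{(k)}(s,s) \leq M_{i,i}^{(k)}(\varepsilon,\varepsilon)$. Writing $s = \sigma s'$ with $\sigma = (z,c)\in\Sigma$, the idempotency rule $\sigma\sigma\sim\sigma$ together with admissibility condition \eqref{eq:admissible3} gives $M_{i,i}^{(k)}(s,s) = \phi_{i,i}(s^{R} s) = \phi_{i,i}((s')^{R}\sigma s')$. The normalization \eqref{eq:admissible1}, applied to the answer slot of $\sigma$ (summing over $c'\in\mathcal{A}$ if $z\in\mathcal{X}$, otherwise over $c'\in\mathcal{B}$), then yields $\sum_{c'}\phi_{i,i}((s')^{R}(z,c')s') = \phi_{i,i}((s')^{R}s') = M_{i,i}^{(k)}(s',s')$, and the right-hand side is bounded by $M_{i,i}^{(k)}(\varepsilon,\varepsilon)$ by the inductive hypothesis (since $|s'|<|s|\leq k$).

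The main subtlety, and what I expect to be the core step, is showing that every summand $\phi_{i,i}((s')^{R}(z,c')s')$ is nonnegative, so that the particular term of interest is upper bounded by the full sum. For each $c'$, set $v=(z,c')s'$, noting $v\in\Sigma^{\leq k}$ since $|v|=|s|\leq k$; using $\sigma\sigma\sim\sigma$ once more I get $v^{R} v\sim (s')^{R}(z,c')s'$, whence $\phi_{i,i}((s')^{R}(z,c')s') = \phi_{i,i}(v^{R}v) = M_{i,i}^{(k)}(v,v)\geq 0$ from PSDness of $M^{(k)}$. This closes the induction and gives $M_{i,i}^{(k)}(s,s)\leq M_{i,i}^{(k)}(\varepsilon,\varepsilon)$. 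Finally, nonnegativity of the diagonal entries combined with \eqref{eq:trM=1} forces $M_{i,i}^{(k)}(\varepsilon,\varepsilon) \leq \tr{M^{(k)}(\varepsilon,\varepsilon)}=1$, and substituting into the Cauchy--Schwarz bound completes the proof.
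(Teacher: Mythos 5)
The paper does not prove this lemma; it cites it as Lemma~5.2 of \cite{russo2017extended}, so there is no in-paper proof to compare against. Your argument is a correct, self-contained derivation and matches the standard proof of the cited result: Cauchy--Schwarz on $2\times2$ principal submatrices of the PSD matrix $M^{(k)}$ reduces the claim to bounding $M_{i,i}^{(k)}(s,s)$; the idempotency rule $\sigma\sigma\sim\sigma$ together with \eqref{eq:admissible3} lets you write $M_{i,i}^{(k)}(s,s)=\phi_{i,i}((s')^R(z,c)s')$ for $s=(z,c)s'$; the marginalization condition \eqref{eq:admissible1} makes this a summand of $M_{i,i}^{(k)}(s',s')$; and the decisive observation that each summand $\phi_{i,i}((s')^R(z,c')s')=M_{i,i}^{(k)}((z,c')s',(z,c')s')\geq 0$ is a diagonal entry of $M^{(k)}$ (valid since $|(z,c')s'|=|s|\leq k$) closes the induction. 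Combined with $\tr{M^{(k)}(\varepsilon,\varepsilon)}=1$ and nonnegativity of diagonals, this gives $M_{i,i}^{(k)}(s,s)\leq 1$ and hence the off-diagonal bound. The length bookkeeping ($|(s')^R(z,c')s'|=2|s|-1\leq 2k-1$, so the admissibility relations are invoked within their domain) is also correct.
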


\begin{lemma}\label{lem:banach_alaoglu} \emph{(Banach--Alaoglu theorem for separable spaces \cite{Banach_Alaoglu_thm})}. Let $X$ be a separable normed space. Then, the closed unit ball in $X^*$ is sequentially weak*-compact. That is, for every sequence $\{\mu_n\}_{n\in\mathbb N}$ in $X^*$ with $\|\mu_n\|\leq 1$ $\forall n\in\mathbb N$, there exists a subsequence $\{\mu_{n_k}\}_{k\in\mathbb N}$ such that weak* converges to $\mu\in X^*$: $\lim_{k\rightarrow\infty}\mu_{n_k}=\mu$.
\end{lemma}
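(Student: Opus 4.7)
The plan is to prove the sequential Banach--Alaoglu theorem for separable normed spaces by a standard diagonal extraction against a countable dense subset, followed by extending the resulting pointwise limit from that subset to all of $X$ via an $\varepsilon/3$ argument.

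First, I would fix a countable dense subset $\{x_m\}_{m\in\mathbb{N}}$ of $X$, which exists by separability. Given any sequence $\{\mu_n\}\subset X^*$ with $\|\mu_n\|\leq 1$, each evaluation satisfies $|\mu_n(x_m)|\leq \|x_m\|$, so for every fixed $m$ the scalars $\{\mu_n(x_m)\}_n$ lie in a compact disk in $\mathbb{C}$. Iterating Bolzano--Weierstrass on $m=1,2,\dots$ and then taking a diagonal subsequence $\{\mu_{n_k}\}_{k\in\mathbb{N}}$, I can arrange that $\mu_{n_k}(x_m)$ converges in $\mathbb{C}$ for every fixed~$m$.

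Next, I would upgrade this to pointwise convergence on all of $X$. For arbitrary $x\in X$ and $\varepsilon>0$, choose $x_m$ with $\|x-x_m\|<\varepsilon/3$. Using $\|\mu_{n_k}\|\leq 1$, the triangle inequality gives
\begin{equation}
|\mu_{n_k}(x)-\mu_{n_j}(x)|\leq \tfrac{2\varepsilon}{3}+|\mu_{n_k}(x_m)-\mu_{n_j}(x_m)|,
\end{equation}
and together with convergence of $\{\mu_{n_k}(x_m)\}_k$ this shows $\{\mu_{n_k}(x)\}_k$ is Cauchy in $\mathbb{C}$, hence convergent. Define $\mu(x):=\lim_{k\to\infty}\mu_{n_k}(x)$.

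Finally, I would verify that $\mu$ is a bounded linear functional: linearity follows by passing to the limit in the linearity of each $\mu_{n_k}$, while $|\mu(x)|\leq \|x\|$ shows $\mu\in X^*$ with $\|\mu\|\leq 1$. By construction $\mu_{n_k}\to\mu$ pointwise, which is precisely weak*-convergence, completing the proof. The one step that requires real care is the diagonal extraction, since a single subsequence must simultaneously work for all countably many anchor points $x_m$; separability is essential here, as the argument breaks down if one tries to do the extraction uniformly over an uncountable family.
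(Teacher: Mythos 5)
The paper does not prove this lemma at all: it is quoted as a known theorem with a literature citation (\cite{Banach\_Alaoglu\_thm}) and then applied in the proof of Theorem~\ref{thm:convergence}. So there is no in-paper argument to compare against. That said, your proof is correct and is the canonical one: Cantor diagonalization over a countable dense set using Bolzano--Weierstrass in $\mathbb{C}$, an $\varepsilon/3$ density argument with the uniform bound $\|\mu_{n_k}\|\leq 1$ to upgrade convergence on the dense set to pointwise convergence on all of $X$, and a routine check that the pointwise limit is linear with norm at most $1$. Your closing remark correctly identifies why separability is the crux --- without a countable dense set the diagonal extraction cannot produce a single subsequence that works at every anchor point, and indeed the sequential form of Banach--Alaoglu can fail for non-separable $X$ even though weak*-compactness of the unit ball still holds.
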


\begin{theorem}\label{thm:convergence}The following holds
\begin{equation}
        \lim_{k\rightarrow\infty}\omega_{admiss}^{(k)}(\mathcal{\mathcal{G}_{C,\eta}}) =\omega_{comm}(\mathcal{\mathcal{G}_{C,\eta}}) .
\end{equation}
\end{theorem}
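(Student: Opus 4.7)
The sequence $\{\omega_{admiss}^{(k)}(\mathcal{G}_{C,\eta})\}_{k \geq 1}$ is monotonically non-increasing by Lemma \ref{lemma:upper_bound} and bounded below by $\omega_{comm}(\mathcal{G}_{C,\eta})$, so the limit exists and is at least $\omega_{comm}(\mathcal{G}_{C,\eta})$. The substance of the theorem is the reverse inequality, and the plan is to extract, from the sequence of near-optimal admissible matrices, data that defines a genuine commuting strategy whose value matches the limit, thereby showing $\lim_{k\to\infty}\omega_{admiss}^{(k)}(\mathcal{G}_{C,\eta}) \leq \omega_{comm}(\mathcal{G}_{C,\eta})$.

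For each $k$, I would pick a matrix $M^{(k)}$ that is $\tfrac{1}{k}$-optimal for the SDP \eqref{eq:SDP}, with associated admissible functions $\phi_{i,j}^{(k)} : \Sigma^{\leq 2k} \to \mathbb{C}$. Extend each $\phi_{i,j}^{(k)}$ by zero to all of $\Sigma^*$; by Lemma \ref{lemma:M<=1} these extensions lie in the closed unit ball of $\ell^\infty(\Sigma^*)$. Since $\Sigma^*$ is countable, $\ell^1(\Sigma^*)$ is separable and $\ell^\infty(\Sigma^*) = \ell^1(\Sigma^*)^*$, so Lemma \ref{lem:banach_alaoglu} supplies a weak-$*$ convergent subsequence. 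Diagonalizing over the $m^2$ block indices produces a single subsequence $\{k_n\}$ along which all $\phi_{i,j}^{(k_n)}$ converge in weak-$*$; testing against point-mass functionals in $\ell^1(\Sigma^*)$ shows the convergence is pointwise on $\Sigma^*$, to limits $\phi_{i,j} : \Sigma^* \to \mathbb{C}$.

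Next I would verify that the limiting functions retain all the required structure. The admissibility relations \eqref{eq:admissible1}--\eqref{eq:admissible3}, the normalization \eqref{eq:trM=1}, the constraint range \eqref{eq:tr_cons}, the loss equation \eqref{eq:tr_eta}, and the objective $\sum_{(z,x,y,v,a,b)\in S} q(z,x,y)\tr{V^{z}_{v} M^{(k_n)}((x,a),(y,b))}$ are each finite linear combinations of $\phi_{i,j}^{(k_n)}$-values, so they all pass to the pointwise limit: any identity or inequality involves only finitely many string positions, each of which eventually lies in $\Sigma^{\leq 2k_n}$ where the $\phi_{i,j}^{(k_n)}$ satisfy the admissibility axioms. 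Positive semidefiniteness of the limiting kernel $M_{(i,s),(j,t)} := \phi_{i,j}(s^R t)$ also passes to the limit: for any finitely supported test vector $v$, one has $v^\dagger M^{(k_n)} v \geq 0$ once $k_n$ exceeds the maximum string length appearing in the support of $v$, and PSD is preserved under entry-wise limits. The objective values along the subsequence tend by construction to $\lim_{k\to\infty}\omega_{admiss}^{(k)}(\mathcal{G}_{C,\eta})$.

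The final step is to materialize the limiting data as an honest commuting strategy via a GNS-type construction, exactly as in the proof of \cite{Extended_non-local_games_andMoE_games}. On the free complex vector space with basis $\{\ket{i,s} : i \in [m], s \in \Sigma^*\}$, define a sesquilinear form by $\langle (j,t) \mid (i,s) \rangle := \phi_{i,j}(t^R s)$; by the previous step this form is positive semidefinite, and quotienting by its kernel and completing yields a Hilbert space $\mathcal{H}$. Left-concatenation operators $A^x_a \ket{i,s} := \ket{i,(x,a)s}$ and $B^y_b \ket{i,s} := \ket{i,(y,b)s}$ descend to projective measurements on $\mathcal{H}$: idempotence follows from $\sigma\sigma \sim \sigma$, orthogonality for $a \neq a'$ (resp.\ $b \neq b'$) from \eqref{eq:admissible2}, completeness from \eqref{eq:admissible1}, and the commutation $[A^x_a, B^y_b] = 0$ follows from $\sigma_A \sigma_B \sim \sigma_B \sigma_A$. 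Taking $\ket{\Psi} := \sum_{i \in [m]} \ket{i}_R \otimes [\ket{i,\varepsilon}] \in \mathcal{H}_R \otimes \mathcal{H}$, equation \eqref{eq:trM=1} makes $\ket{\Psi}$ a unit vector, and one verifies directly that $\bra{\Psi}(V^z_v \otimes A^x_a B^y_b)\ket{\Psi} = \tr{V^z_v M((x,a),(y,b))}$, so the resulting commuting strategy satisfies the constraints in $C$, reproduces the loss fraction $\eta$ via \eqref{eq:tr_eta}, and achieves the limiting objective value. I expect the most delicate step to be this GNS reconstruction, specifically ensuring that the fixed-dimension referee register $R$ is embedded correctly so that the matrix entries $M_{i,j}((x,a),(y,b))$ are exactly recovered; the rest of the verifications are by now standard.
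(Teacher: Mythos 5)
Your proposal is correct and follows essentially the same route as the paper's proof: extract a pointwise-convergent subsequence of near-optimal admissible matrices via Banach--Alaoglu (using Lemma \ref{lemma:M<=1} for the uniform bound), check that admissibility, normalization, the constraints \eqref{eq:tr_cons}--\eqref{eq:tr_eta}, positive semidefiniteness, and the objective value all pass to the limit, and then run a GNS-type construction to turn the limiting positive-semidefinite kernel into a genuine commuting strategy whose value matches the limit. The only cosmetic differences are that you spell out the dual pair $\ell^1(\Sigma^*)/\ell^\infty(\Sigma^*)$ and the diagonalization over block indices more explicitly than the paper does, and you define the operators $A^x_a$, $B^y_b$ by their left-concatenation action on basis vectors rather than (as the paper does) as the orthogonal projections onto $\mathrm{span}\{\ket{\psi_{j,(z,c)s}}\}$ before verifying the concatenation identity --- these are equivalent presentations of the same construction.
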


In summary, for a given game $\mathcal{G}_{\eta,C}$, Lemma \ref{lemma:upper_bound} shows us that computing the SDP \eqref{eq:SDP} will provide an upper bound to the value $\omega_{Q}(\mathcal{G}_{\eta,C})$. Moreover, Theorem \ref{thm:convergence} tells us that the upper bounds will actually converge to the optimal value of $\omega_{comm}(\mathcal{G}_{\eta,C})$. 

\begin{proof}
From Lemma \ref{lemma:upper_bound} we have a non-increasing sequence $\{\omega_{admiss}^{(k)}(\mathcal{\mathcal{G}_{C,\eta}})\}$ that for every $k$, $\omega_{admiss}^{(k)}(\mathcal{\mathcal{G}_{C,\eta}}) \geq \omega_{comm}(\mathcal{\mathcal{G}_{C,\eta}}).$
We will first show the convergence of $\omega_{admiss}^{(k)}(\mathcal{\mathcal{G}_{C,\eta}})$, and finally that it converges to $\omega_{comm}(\mathcal{\mathcal{G}_{C,\eta}})$.

By Lemma \ref{lemma:M<=1}, we have that $\abs{M_{i,j}^{(k)}(s,t)}\leq 1$ $\forall i,j\in\{1,\ldots,m\}, s,t\in\Sigma^{\leq k}$. By the Banach-Alaoglu theorem for separable spaces (Lemma \ref{lem:banach_alaoglu}), there exists a subsequence $\{M_{i,j}^{(k_l)}(s,t)\}_{l\in \mathbb N}$ and $M_{i,j}(s,t)$ such that $\lim_{l\rightarrow\infty}=M_{i,j}(s,t)$. 
Let 
\begin{equation}
    M = \begin{pmatrix}
        M_{1,1} & \cdots & M_{1,m} \\
        \vdots & \ddots & \vdots \\
        M_{m,1} & \cdots & M_{m,m}
    \end{pmatrix},
\end{equation}
then, by construction $M^{(k_l)}\rightarrow M$.
The $M_{i,j}$ are such that
\begin{enumerate}
    \item For every $i,j\in\{1,\ldots,m\}$, there exists an admissible function $\phi_{i,j}:\Sigma^{*}\rightarrow\mathbb C$ such that for every $s,t\in\Sigma^{\leq k}$, 
    \begin{equation}
        M^{(k)}_{i,j}(s,t)=\phi_{i,j}(s^Rt).
    \end{equation}
    \item Equation \eqref{eq:trM=1} holds,
    \item For all $\ell\in\{1,\dots,L\}$, equations \eqref{eq:tr_cons}, and \eqref{eq:tr_eta} hold.
    \item The matrix $M$ is positive semidefinite. 
\end{enumerate}

We want to see that $M$ defines a commuting measurement strategy $S_{comm}=\{\ket\psi,A^x_a,B^y_b\}$. Since $M$ is positive semidefinite, there exists vectors $\ket{\psi_{i,s}}\in\mathcal{H}$ for $i\in\{1,\ldots,m\}$  and $s\in\Sigma^*$, for a separable Hilbert space $\mathcal H$,  such that 
\begin{equation}\label{eq:M_ij=braket}
    M_{i,j}(s,t)=\braket{\psi_{j,s}}{\psi_{i,t}}.
\end{equation}
 
Assume $\mathcal{H}$ is spanned by $\{\ket{\psi_{i,s}}\}$, otherwise let $\mathcal{H}=\text{span} \{\ket{\psi_{i,s}}\}$. 
\begin{itemize}
    \item Consider the state 
    \begin{equation}
        \ket\psi=\sum_{j=1}^m\ket j\ket{\psi_{j,\varepsilon}}\in \mathcal{H}_R\otimes\mathcal{H}, 
    \end{equation}
    which is normalized, since
    \begin{equation}
        \|\ket{\psi}\|^2=\braket{\psi}{\psi}=\sum_{j=1}^m\braket{j}{j}\braket{\psi_{j,\varepsilon}}{\psi_{j,\varepsilon}}=\sum_{j=1}^m\braket{\psi_{j,\varepsilon}}{\psi_{j,\varepsilon}}=\sum_{j=1}^mM_{j,j}(\varepsilon,\varepsilon)=1,
    \end{equation}
    where we used \eqref{eq:M_ij=braket} and \eqref{eq:trM=1}. 

    \item For all $(z,c)\in\Sigma$, let $\Pi^z_c$ be the projection operator onto the  following space
    \begin{equation}\label{eq:spanPi}
        \text{span}\{\ket{\psi_{j,(z,c)s}}:j\in\{1,\ldots,m\},s\in\Sigma^* \}.
    \end{equation}
    We want to see that these operators are projective measurements, and that the ones corresponding to Alice commute with the ones corresponding to Bob. 
    Notice that \eqref{eq:spanPi} is the image of $\Pi^z_c$, which we denote by Im$(\Pi^z_c)$. 
    For all $j\in\{1,\ldots,m\}$ and $s\in\Sigma^*$,
    \begin{equation}
    \begin{split}
        \braket{\psi_{j,(z,c)t}}{\psi_{j,s}}&=M_{i,j}((z,c)t,s)=\phi_{i,j}(t^R(z,c)s)=\phi_{i,j}(t^R(z,c)(z,c)s)\\&=M_{i,j}((z,c)t,(z,c)s)=\braket{\psi_{j,(z,c)t}}{\psi_{i,(z,c)s}}.
    \end{split}
    \end{equation}
By linearity, for every $\ket{\varphi_1}\in \Im(\Pi^z_c)$,
\begin{equation}\label{eq:psi_js}
    \braket{\varphi_1}{\psi_{j,s}}=\braket{\varphi_1}{\psi_{j,(z,c)s}}.
\end{equation}
Let $(\Pi^z_c)^\perp$ denote the projection onto $\Im(\Pi^z_c)^\perp$ (orthogonal complement), so that  $\Pi^z_c+(\Pi^z_c)^\perp=\mathbb I$. Then,
\begin{equation}
\begin{split}
    \braket{\varphi_1}{\psi_{j,s}}=\braket{\varphi_1}{(\Pi^z_c+(\Pi^z_c)^\perp)\psi_{j,s}}
    =
   \bra{\varphi_1}\Pi^z_c\ket{\psi_{j,s}}+\bra{\varphi_1}(\Pi^z_c)^\perp\ket{\psi_{j,s}}= \bra{\varphi_1}\Pi^z_c\ket{\psi_{j,s}},
   \end{split}
\end{equation}
where we used that $(\Pi^z_c)^\perp\ket{\psi_{j,s}}\in \Im(\Pi^z_c)^\perp$ and therefore, $\bra{\varphi_1}(\Pi^z_c)^\perp\ket{\psi_{j,s}}=0$. Using \eqref{eq:psi_js}, 
we have that 
\begin{equation}
    \bra{\varphi_1}\Pi^z_c\ket{\psi_{j,s}}=\braket{\varphi_1}{\psi_{j,(z,c)s}}.
\end{equation}

Take an arbitrary $\ket{\varphi}\in\mathcal{H}$, then $\ket\varphi=\ket{\varphi_1}+\ket{\varphi_1^{\perp}}$, for $\ket{\varphi_1}\in \Im(\Pi^z_c)$ and $\ket{\varphi_1^{\perp}}\in \Im(\Pi^z_c)^\perp$. Then, 
\begin{equation}\label{eq:equal_in_ImPi}
\bra{\varphi}\Pi^z_c\ket{\psi_{j,s}}=\bra{\varphi_1}\Pi^z_c\ket{\psi_{j,s}}+\bra{\varphi_1^\perp}\Pi^z_c\ket{\psi_{j,s}}=\bra{\varphi_1}\Pi^z_c\ket{\psi_{j,s}},
\end{equation}
\begin{equation}
\braket{\varphi}{\psi_{j,(z,c)s}}=\braket{\varphi_1}{\psi_{j,(z,c)s}}+\braket{\varphi_1^\perp}{\psi_{j,(z,c)s}}=\braket{\varphi_1}{\psi_{j,(z,c)s}},
\end{equation}
where we used $\bra{\varphi_1^\perp}\Pi^z_c\ket{\psi_{j,s}}=0$ and $\braket{\varphi_1^\perp}{\psi_{j,(z,c)s}}=0$. Then, we have that, because of \eqref{eq:equal_in_ImPi}, for all $\ket\varphi\in\mathcal H$, and for all $\ket{\psi_{j,s}}$, $\bra{\varphi}\Pi^z_c\ket{\psi_{j,s}}=\braket{\varphi}{\psi_{j,(z,c)s}}$, and therefore, 
\begin{equation}\label{eq:Pipsi}
    \Pi^z_c\ket{\psi_{j,s}}=\ket{\psi_{j,(z,c)s}}.
\end{equation}
Then, for all $i,j\in\{1,\ldots,m\}$, $s,t\in\Sigma^*$, for $(z,c_1),(z,c_2)\in\Sigma$ with $c_1\neq c_2$,
\begin{equation}
    \bra{\psi_{j,t}}\Pi^z_{c_1}\Pi^z_{c_2}\ket{\psi_{i,s}}=\braket{\psi_{j,(z,c_1)t}}{\psi_{i,(z,c_2)s}}=M_{i,j}((z,c_1)t,(z,c_2)t)=\phi_{i,j}(t^R(z,c_1)(z,c_2)s)=0. 
\end{equation}
Therefore, we have that 
\begin{equation}
    \Pi^z_{c_1}\Pi^z_{c_2}=0. 
\end{equation}
Moreover, for all $i,j\in\{1,\ldots,m\}$, $s,t\in\Sigma^*,x\in\mathcal{X}$,
\begin{equation}
\sum_{a\in\mathcal{A}}\bra{\psi_{i,s}}\Pi^x_a\ket{\psi_{j,t}}=\sum_{a\in\mathcal{A}}\braket{\psi_{i,s}}{\psi_{j,(x,a)t}}=\sum_{a\in\mathcal A}\phi_{i,j}(s^R(x,a)t)=\phi_{i,j}(s^Rt)=\braket{\psi_{i,s}}{\psi_{j,t}},
\end{equation}
and thus, 
\begin{equation}
    \sum_{a\in\mathcal{A}}\Pi^x_a=\mathbb I. 
\end{equation}
Analogously, one finds that for all $y\in\mathcal Y$, 
\begin{equation}
    \sum_{b\in\mathcal{B}}\Pi^y_b=\mathbb I. 
\end{equation}

For all $i,j\in\{1,\ldots,m\},s,t\in\Sigma^*,x\in\mathcal{X},y\in\mathcal{Y},a\in\mathcal A, b\in\mathcal{B}$,
\begin{equation}
\begin{split}
     \bra{\psi_{i,s}}\Pi^x_a\Pi^y_b\ket{\psi_{j,t}}&=\braket{\psi_{i,(x,a)s}}{\psi_{j,(y,b)t}}=\phi_{i,j}(s^R(x,a)(y,b)t)=\phi_{i,j}(s^R(y,b)(x,a)t)\\&=\braket{\psi_{i,(y,b)s}}{\psi_{j,(x,a)t}}= \bra{\psi_{i,s}}\Pi^y_b\Pi^x_a\ket{\psi_{j,t}},
\end{split}
\end{equation}
and therefore
\begin{equation}
    [\Pi^x_a,\Pi^y_b]=0.
\end{equation}
\end{itemize}
Lastly, we see that, from \eqref{eq:Pipsi},
\begin{equation}
    M_{i,j}((x,a),(y,b))=\braket{\psi_{j,(x,a)}}{\psi_{i,(y,b)}}=\bra{\psi_{j,\varepsilon}}A^x_aB^y_b\ket{\psi_{i,\varepsilon}},
\end{equation}
and, thus, 
\begin{equation}
\begin{split}
     K(a,b\mid x,y)&=\ptr{\mathcal H}{(\mathbb I_R\otimes A^x_aB^y_b)\ketbra{\psi}{\psi}}
    =\sum_{l,k}\bra{\psi_{l,k}}(\mathbb I_R\otimes A^x_aB^y_b)\sum_{i,j}\ket{i}\ket{\psi_{i,\varepsilon}}\bra{\psi_{j,\varepsilon}}\bra{j}\ket{\psi_{l,k}}\\
    &=\sum_{i,j}\bra{\psi_{j,\varepsilon}}A^x_aB^y_b\ket{\psi_{i,\varepsilon}}\ketbra{i}{j}=\sum_{i,j}M_{i,j}((x,a),(y,b))\ketbra{i}{j}=M((x,a),(y,b)).
\end{split}
\end{equation}

\end{proof}

\section{Concrete games}\label{section:concrete_games}
In this section, we consider particular extended non-local games, and we analyze lossy and constrained versions of these games, inspired by experimental considerations computing the SDPs that give the values of $\omega_{admiss}^{(k)}$. We show that our results are consistent with previously known results, for some cases with a simpler proof, and we show new results, and tight results in Section~\ref{section:lossy_moe_games}. For a reader interested in applications in quantum cryptography, the games analyzed in this section will be used in Section~\ref{sec:applications} to prove security for cryptographic primitives.

\subsection{Constrained BB84 monogamy-of-entanglement game}\label{section:cons_BB84}

In~\cite{TomamichelMonogamyGame2013}, the following extended non-local game (the BB84 \emph{monogamy-of-entanglement game)} was introduced 
\begin{equation}
    \mathcal{G}^{BB84}=\left(q(x)=\frac{1}{\abs{\mathcal{X}}},V=\{V^x_a=H^x\ketbra{a}{a}H^x\}_{x,a}, W=\{(x,v,a,b)\mid v=a=b\}_{x,v,a,b}\right),
\end{equation}
with $\mathcal X=\mathcal V=\mathcal A=\mathcal B=\{0,1\}$. In this game, Alice and Bob prepare a quantum state and give a qubit to the referee, who measures it either in the computational or the Hadamard basis. The task of Alice and Bob is to guess the measurement outcome. In \cite{TomamichelMonogamyGame2013}, it was shown that $\omega_Q(\mathcal{G}^{BB84})=\cos^2\left(\frac{\pi}{8}\right)$, which is attained by the strategy $S_{Q}^{BB84}=\{\ketbra{\phi}{\phi},A_a^x=\delta_{a0},B_a^x=\delta_{a0}\}$, where $\ket{\phi}=\cos\frac{\pi}{8}\ket{0}+\sin\frac{\pi}{8}\ket{1}$.

We will consider the constrained version of $ \mathcal{G}^{BB84}$ given by imposing that Alice and Bob never answer differently, i.e.\ imposing that their probability of answering different is 0.
There are settings in quantum cryptography, for example in bit-commitment and quantum-position verification, where this constraint can be added naturally -- see Section~\ref{sec:rel_BC} and Section \ref{sec:QPV}.
In this case, the set of constraints $C$ is given by: for every  strategy $S_{Q}=\{\rho_{RAB},A^x_a,B^y_b\}_{x,y,a,b}$,
\begin{equation}
    \sum_{x,a,a'\neq b'}q(x)\tr{(V^{x}_{a}\otimes A^x_{a'} \otimes B^x_{b'}) \rho_{RAB}}=0.
\end{equation}
From Section \ref{section:convergence_proof}, we have that 
\begin{equation}
    \omega_Q(\mathcal{G}^{BB84}_C)\leq\omega_{comm}(\mathcal{G}^{BB84}_C)\leq \omega_{admiss}^{(k=1)}(\mathcal{G}^{BB84}_C).
\end{equation}
Solving the SDP \eqref{eq:SDP} to obtain the value of $\omega_{admiss}^{(k=1)}(\mathcal{G}^{BB84}_C)$, see code \cite{code}, we find 
\begin{equation}
    \omega_{admiss}^{(k=1)}(\mathcal{G}^{BB84}_C)=0.8535533905\simeq \cos^2\left(\frac{\pi}{8}\right),
\end{equation}
and therefore we see that constraining Alice and Bob by forbidding them to answer inconsistently does not lower their average winning probability. This is not surprising, since in \cite{TomamichelMonogamyGame2013} it was shown that the strategy $S_{Q}^{BB84}$, see above, gives the optimal value for $\mathcal{G}^{BB84}$. This strategy fulfills the constraints $C$, and therefore it is also optimal for $\mathcal{G}^{BB84}_C$.

Moreover, we see that for this game  giving inconsistent answers precludes attaining the optimal value. For instance, if we impose the constraints $C_\varepsilon$ given by 
\begin{equation}\label{eq:constraint_epsilon}
    \sum_{x,a,a'\neq b'}q(x)\tr{(V^{x}_{a}\otimes A^x_{a'} \otimes B^x_{b'}) \rho_{RAB}}\geq\varepsilon,
\end{equation}
for $\varepsilon\geq0$, thus imposing that the probability that Alice and Bob answer inconsistently is at least $\varepsilon$, the upper bounds obtained by $\omega_{admiss}^{(k=1)}(\mathcal{G}^{BB84}_{C_\varepsilon})$ for different values of $\varepsilon>0$ are lower than $\cos^2\left(\frac{\pi}{8}\right)$, see Fig.~\ref{Fig:QPV_epsilon} (black dots) for $0\leq\varepsilon\leq0.25$. This fact is consistent with the rigidity results for the $\mathcal{G}^{BB84}$ game shown in~\cite{RigidityMoE}.

\begin{figure}[h]
\centering
\includegraphics[width=125mm]{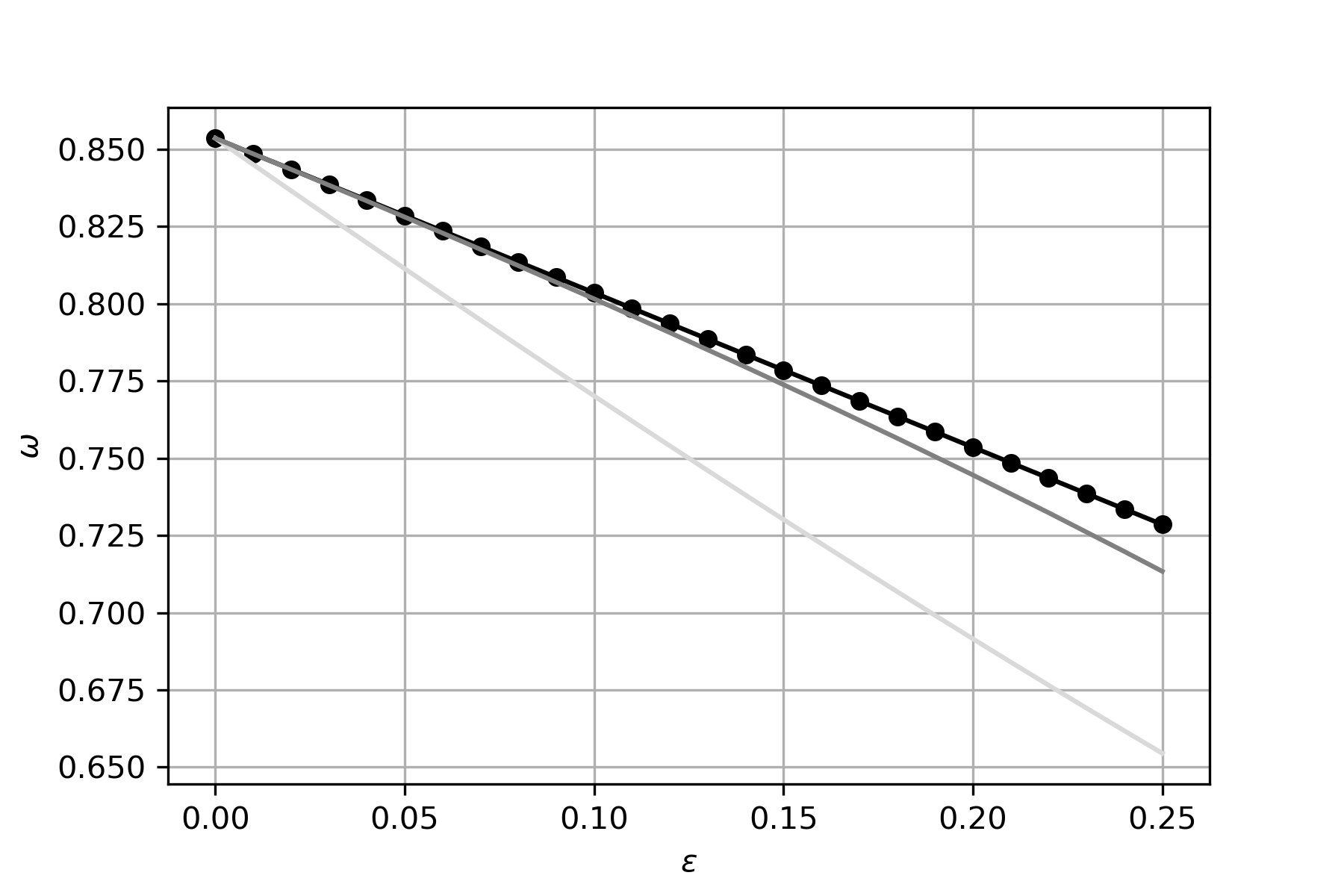}
\caption{Values of $\omega_{admiss}^{(k=1)}(\mathcal{G}^{BB84}_{C_\varepsilon})$ (black dots), optimal winning probability using unentangled strategies (light gray continuous line), 2-qubit state (dark gray continuous line) and 3-qubit state (black continuous line) for different values of $\varepsilon$. }
\label{Fig:QPV_epsilon}
\end{figure}

A numerical optimization over states $\ket{\phi}_{VAB}$ of dimension $2^3$ with real coefficients, where each register is 2-dimensional (3-qubit states), and 2-dimensional projective measurements for both Alice and Bob, shows that the results  obtained with the first level of the hierarchy ($\omega_{admiss}^{(k=1)}(\mathcal{G}^{BB84}_{C_\varepsilon})$) are \emph{tight}, up to numerical precision. See the continuous black line in Fig.~\ref{Fig:QPV_epsilon} for the values of the numerical optimization, which match the values of $\omega_{admiss}^{(k=1)}(\mathcal{G}^{BB84}_{C_\varepsilon})$ (black dots), and see \cite{code} for the explicit states $\ket{\phi}_{VAB}$ and the 2-dimensional projective measurements $\{A^x_a\}$ and $\{B^y_b\}$ for different values of $\varepsilon$ fulfilling the constraint~\eqref{eq:constraint_epsilon} and matching the upper bounds $\omega_{admiss}^{(k=1)}(\mathcal{G}^{BB84}_{C_\varepsilon})$.

In addition, the optimal values for $\varepsilon>0$, unlike for $\varepsilon=0$ (strategy $S^{BB84}_Q$), cannot be obtained by unentangled strategies, i.e.\ by just preparing a single qubit and sending it to the referee. A numerical optimization over unentangled strategies fulfilling the constraint \eqref{eq:constraint_epsilon}, see \cite{code}, shows that the optimal values are significantly lower than $\omega_{admiss}^{(k=1)}(\mathcal{G}^{BB84}_{C_\varepsilon})$, see light gray line in Fig.~\ref{Fig:QPV_epsilon}. Numerically optimizing over all states $\ket{\phi}_{VAB}$ of dimension $2^2$, with the referee's and Alice's registers being both 2-dimensional and Alice performing projective measurements, shows that the optimal value increases, being \emph{tight} (only) for small values of $\varepsilon$, see the dark gray line in Fig.~\ref{Fig:QPV_epsilon}.  Therefore, we see that imposing certain constraints forces Alice and Bob to utilize higher dimensional quantum systems to achieve the optimal winning probability. We would like to stress that we do not have an understanding of the fundamental reason behind this requirement. 

\begin{remark}
In order to analyze optimal strategies, if one is able to find an optimal strategy for an extended non-local game (or monogamy-of-entanglement game), one can constrain the game and see if certain restrictions on the strategies affect the optimal winning value. Finding a still-optimal perturbation of an optimal strategy in such way is a way to exclude possible rigidity results.
\end{remark}

\subsection{Alice guessing game with constraints}\label{section:Alice_guess_game}
We consider a slight variation of the $\mathcal G^{BB84}$ monogamy-of-entanglement game where only Alice has to guess the measurement outcome of the referee, described as follows:
\begin{equation}
    \mathcal G^{\text{A } n \text{ guess}}:=\left(q(x)=\frac{1}{\abs{\mathcal X}}, V=\{V^x_a=\ketbra{a^x}{a^x}\}_{x,a}, W=\{(x,v,a,b)\mid a=v\}_{x,v,a,b}\right),
\end{equation}
with $\mathcal X=\mathcal V=\mathcal A=\mathcal B=\{0,1\}^n$. 
Then, 
\begin{equation}
    \omega_{Q}(  \mathcal G^{\text{A } n \text{ guess}})=\frac{1}{2^n}\sum_{x,a}\tr{\bigg(V^{x}_{v}\otimes A^x_{a} \otimes (\sum_bB^x_{b}) \bigg)\rho_{RAB}}=\frac{1}{2^n}\sum_{x,a}\tr{(V^{x}_{v}\otimes A^x_{a} \otimes \mathbb I )\rho_{RAB}}.
\end{equation}
Notice that this game can be won with probability 1 if Alice and Bob send to the referee the maximally entangled state $\ket{\Phi^+}^{\otimes n}_{RA}$, Alice performs the same measurement as the referee and Bob answers at random:
\begin{equation}
    \omega_{Q}(  \mathcal G^{\text{A } n \text{ guess}})=1.
\end{equation}
Now, consider its constrained version given by imposing that Bob also has to guess the referee's measurement outcome up to an error $p_{err}$ per bit, but does not have to coordinate his answers with Alice. This variant will be motivated by device-independent QKD, see Section~\ref{sec:QKD}. In such a case, the set of constraints $C_{p_{err}}$ is described as: for every  strategy $S_{Q}=\{\rho_{RAB},A^x_a,B^y_b\}_{x,y,a,b}$,
\begin{equation}
    \sum_{x,a}q(x)\tr{(V^{x}_{v}\otimes A^x_{a} \otimes B^x_{b}) \rho_{RAB}} \leq p_{err}^{d_H(v,b)}\hspace{0.5cm}\forall v,b.
\end{equation}

In order to find an upper bound on the winning probability, we find the values of  $\omega_{admiss}^{(k=1)}(\mathcal{G}_C^{\text{A } n \text{ guess}})$ for $n=1,2$ computing the corresponding SDPs, see Fig.~\ref{Fig:QKD} for their values. 
Notice that we obtain tight bounds for $p_{err}=0$ and $p_{err}\geq\frac{1}{2}$, since for $p_{err}=0$, i.e.\ if Bob's outcome has to perfectly match the referee's outcome, the optimal values are upper bounded by $0.5$ and $0.25$, for $n=1$ and $n=2$, respectively, which can be obtained by the strategy where Alice and Bob send $\ket{\Phi^+}^{\otimes n}_{RB}$, Bob measures $V^x_b$ and Alice and makes the random guess $a=0\ldots0$, which will be correct with probability $\frac{1}{2^n}$.  On the other hand, if $p_{err}\geq \frac{1}{2}$, Bob can answer randomly and Alice can share $\ket{\Phi^+}^{\otimes n}_{RA}$ with the referee and perform the same measurement, in which case Alice's outcomes will be perfectly correlated with the referee's outcomes.

\begin{figure}[h]
\centering
\includegraphics[width=125mm]{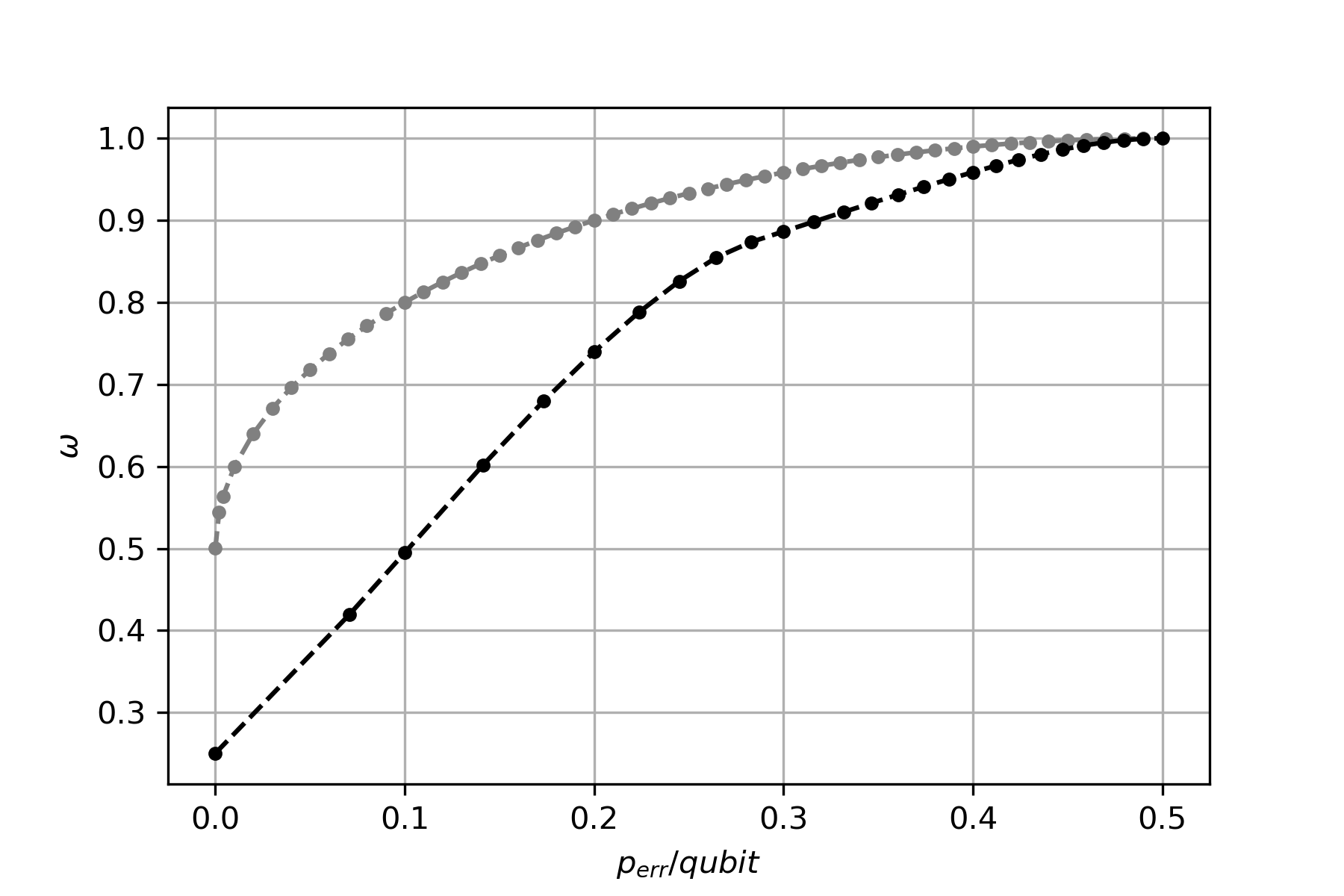}
\caption{Upper bounds for $\omega_Q(\mathcal{G}_{C_{p_{err}}}^{\text{A } n \text{ guess}})$ obtained from solving the SDP giving the value of $\omega_{admiss}^{(k=1)}(\mathcal{G}_C^{\text{A } n \text{ guess}})$ for $n=1$ (in grey) and $n=2$ (in black).}
\label{Fig:QKD}
\end{figure}

\subsection{Local guessing game}\label{section:local_guess_game}
Let $n\in\mathbb N$, and consider the extended non-local game where Alice and Bob receive input $x\in\{0,1\}$ and the referee chooses $z\in\{0,1\}^n$, all uniformly at random. The referee's measurements (on a system of $n$ qubits) are $V^z_v=\ketbra{v_0^{z_0}}{v_0^{z_0}}\otimes \ldots\otimes \ketbra{v_{n-1}^{z_{n-1}}}{v_{n-1}^{z_{n-1}}}$, for $v\in\{0,1\}^n$. Alice and Bob have to answer $a,b\in\{0,1\}^n$, and they win the game if and only if both answers are the same and, for every $i$ such that $z_i=x$, then $a_i=b_i=v_i$, i.e.\ whenever $R$ measures qubit $i$ in basis $x$, Alice and Bob have to guess correctly the measurement outcome. This game will be denoted by $\mathcal{G}^{n-\text{local guessing}}$, and, in the above notation,
\begin{equation}
    \mathcal{G}^{n-\text{local guessing}}:=\left(q(z,x,y)=\frac{\delta_{x,y}}{2}\frac{1}{2^n},V=\{V^z_v\}_{z,v}, W=\{(z,x,v,a,b)\mid a=b \text{ and } \forall i \text{ s.t.\ } z_i=x, a_i=v_i\}\right).
\end{equation}

\begin{prop} 
The optimal average winning probability of $\mathcal{G}^{n-\text{local guessing}}$ for $n=1$ is given by
\begin{equation}
    \omega_Q(\mathcal{G}^{1-\text{local guessing}})=\frac{1+\cos^2\left(\frac{\pi}{8}\right)}{2}.
\end{equation}
Moreover, this value is attained by the strategy $S_{Q}^{BB84}=\{\ketbra{\phi}{\phi},A_a^x=\delta_{a0},B_a^x=\delta_{a0}\}$, where $\ket{\phi}=\cos\frac{\pi}{8}\ket{0}+\sin\frac{\pi}{8}\ket{1}$.
\end{prop}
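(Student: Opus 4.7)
The strategy is to decompose the average winning probability according to whether the referee's basis choice $z$ matches the question $x$ sent to the players, and then reduce the $z=x$ branch to $\mathcal{G}^{BB84}$.

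First, because $q(z,x,y) = \frac{\delta_{x,y}}{2}\cdot \frac{1}{2}$, Alice and Bob always receive the same question $x\in\{0,1\}$, and $z\in\{0,1\}$ is independent and uniform. For $n=1$ the winning condition becomes: $a=b$, and additionally $a=v$ whenever $z=x$. I would therefore write, for any strategy $S_Q=\{\rho_{RAB},A^x_a,B^x_b\}$,
\begin{equation}
\begin{split}
\omega(S_Q) = &\frac{1}{4}\sum_{x,v}\tr{(V^x_v\otimes A^x_v\otimes B^x_v)\rho_{RAB}} \\
&+ \frac{1}{4}\sum_{\substack{z,x\\ z\neq x}}\sum_{a}\tr{(\mathbb{I}_R\otimes A^x_a\otimes B^x_a)\rho_{RAB}},
\end{split}
\end{equation}
using $\sum_v V^z_v=\mathbb{I}_R$ in the $z\neq x$ branch, where $v$ is unconstrained.

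Next, observe that the first sum is exactly the objective of the BB84 monogamy-of-entanglement game with uniform question distribution $q(x)=1/2$; by \cite{TomamichelMonogamyGame2013} it is bounded above by $\omega_Q(\mathcal{G}^{BB84})=\cos^2(\pi/8)$. The second sum is a sum over the two pairs $(z,x)\in\{(0,1),(1,0)\}$ of agreement probabilities $p_{\text{agree}}(x):=\sum_a\tr{(A^x_a\otimes B^x_a)\rho_{AB}}\leq 1$, so it is bounded above by $\tfrac{1}{2}$. Putting the two bounds together yields
\begin{equation}
\omega_Q(\mathcal{G}^{1-\text{local guessing}})\leq \frac{1}{2}\cos^2\!\left(\frac{\pi}{8}\right)+\frac{1}{2}=\frac{1+\cos^2(\pi/8)}{2}.
\end{equation}

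For matching this upper bound with a concrete strategy, I would simply verify that $S_Q^{BB84}$ saturates both terms simultaneously. Since in $S_Q^{BB84}$ Alice and Bob deterministically answer $a=b=0$, the agreement probability is trivially $1$ for every $x$, saturating the second term; and since this strategy is also optimal for $\mathcal{G}^{BB84}$, the first term equals $\tfrac{1}{2}\cos^2(\pi/8)$ as computed directly via $|\langle 0|\phi\rangle|^2=\cos^2(\pi/8)$ and $|\langle 0^1|\phi\rangle|^2=\tfrac{1}{2}(1+\sin(\pi/4))=\cos^2(\pi/8)$. This gives the matching lower bound.

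There is no real obstacle here: the only subtlety is that the two terms must be jointly maximized by the same strategy, but this is automatic because the optimal BB84 strategy happens to be deterministic and therefore perfectly consistent. The proof is essentially a decomposition argument combined with the already-known value of $\omega_Q(\mathcal{G}^{BB84})$.
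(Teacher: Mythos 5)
Your proof is correct and takes essentially the same approach as the paper: split the winning probability into the $z=x$ branch (which reduces to the BB84 monogamy-of-entanglement game, contributing $\tfrac{1}{2}\cos^2(\pi/8)$) and the $z\neq x$ branch (bounded by the agreement probability, contributing $\tfrac{1}{2}$), then note that $S_Q^{BB84}$ saturates both simultaneously. The only cosmetic difference is that the paper writes the $z\neq x$ bound via $\sum_a A^{1-x}_a\otimes B^{1-x}_a \preceq \mathbb{I}$ and $\sum_{x,v}V^x_v = 2\mathbb{I}$, while you phrase it directly as an agreement probability, but these are the same estimate.
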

\begin{proof}
The strategy $S_{Q}^{BB84}=\{\ketbra{\phi}{\phi},A_a^x=\delta_{a0},B_a^x=\delta_{a0}\}$ is such that its winning average probability is $\frac{1+\cos^2\left(\frac{\pi}{8}\right)}{2}.$ On the other hand, consider an arbitrary strategy $\mathcal{S}=\{\rho_{RAB}, A^x_a, B^x_b\}$, then

\begin{equation}
\begin{split}
   \omega_{Q}&(\mathcal{G}^{1-\text{local guessing}})\big|_{\mathcal{S}}=
    \frac{1}{2^2}\sum_{(z,x,v,a,b)\in W}\tr{(V^z_v\otimes A^x_a\otimes B^x_b)\rho_{RAB}}\\&=
    \frac{1}{2^2}\sum_{x,v,a}\tr{(V^x_v\otimes A^{1-x}_a\otimes B^{1-x}_a)\rho_{RAB}}+\frac{1}{2^2}\sum_{x,a}\tr{(V^x_a\otimes A^x_a\otimes B^x_a)\rho_{RAB}}\\&=
    \frac{1}{2^2}\sum_{x,v}\tr{\big(V^x_v\otimes (\sum_a A^{1-x}_a\otimes B^{1-x}_a)\big)\rho_{RAB}}+\frac{1}{2^2}\sum_{x,a}\tr{(V^x_a\otimes A^x_a\otimes B^x_a)\rho_{RAB}}\\
    &\leq \frac{1}{2^2}\tr{\bigg(\sum_{x,v}V^x_v\bigg)\rho_{RAB}}+\frac{1}{2}\omega(\mathcal{G}^{BB84})=\frac{1}{2^2}\tr{2\rho_{RAB}}+\frac{1}{2}\omega(\mathcal{G}^{BB84})=\frac{1+\cos^2\left(\frac{\pi}{8}\right)}{2},
\end{split}\end{equation}
where we split the first sum in $x=z$ and $x\neq z$, then we used $\sum_a A^{1-x}_a\otimes B^{1-x}_a\preceq \mathbb I$ and that $\frac12\sum_{x,a}\tr{(V^x_a\otimes A^x_a\otimes B^x_a)\rho_{RAB}}\leq \omega_{Q}(\mathcal{G}^{BB84})$, and finally, that $\sum_{x,v}V^x_v=2\mathbb I$ and $\tr{\rho_{RAB}}=1.$
    
\end{proof}

The $n$-fold parallel repetition of the strategy $\mathcal S_{Q}^{BB84}$, denoted by $(\mathcal S_{Q}^{BB84})^{\times n}$ does not provide the optimal value for $\mathcal{G}^{n-\text{local guessing}}$ for all $n$. To see this, notice that on the one hand, this strategy has average winning probability 
\begin{equation}\label{eq:BC_breidbart_w}
    \omega(\mathcal{G}^{n-\text{local guessing}})\big|_{(S_{Q}^{BB84})^{\times n}}=\left(\frac{1}{4}\Big(3+\frac{1}{\sqrt{2}}\Big)\right)^n.
\end{equation}
On the other hand, consider the strategy $\mathcal{S}_{n-0}=\{\ket{0}^{\otimes n}, A^x_a=\delta_{a,0\ldots 0}, B^y_b= \delta_{b,0\ldots 0}\}$. This strategy is such that 
\begin{equation}
      \omega(\mathcal{G}^{n-\text{local guessing}})\big|_{\mathcal S_{n-0}}=\frac{1}{2}+\frac{1}{2}\left(\frac{3}{4}\right)^n.
\end{equation}
To see this, let $\mathcal{T}_{xz}:=\{i\mid z_i=x\}$, let $t=\abs{T_{xz}}$, denote by $\mathcal{T}_{xz}^c$ its complement, and, abusing of notation, we will write $\sum_{v\in\mathcal{T}_{xy}}$ for a bit string $v\in\{0,1\}$  to sum the indices $i$ of $v$ such that $i\in \mathcal{T}_{xz}$. Then, 
\begin{equation}
    \begin{split}
         \omega&(\mathcal{G}^{n-\text{local guessing}})\big|_{\mathcal S_{n-0}}=\frac{1}{2^{n+1}}\sum_{(z,x,v,a,b)\in W}\tr{(\ketbra{0}{0})^{\otimes n}\ketbra{v^z}{v^z}\delta_{a,0}}\\&=\frac{1}{2^{n+1}}\sum_{z,x}\tr{(\ketbra{0_{\mathcal T_{xz}}}{0_{\mathcal T_{xz}}}\otimes \ketbra{0_{\mathcal T_{xz}^c}}{0_{\mathcal T_{xz}^c}})(\ketbra{0_{\mathcal T_{xz}}^{x\ldots x}}{0_{\mathcal T_{xz}}^{x\ldots x}}\otimes  \sum_{v\in \mathcal T_{xz}^c}\ketbra{v_{\mathcal T_{xz}^c}^{1-x\ldots1-x}}{v_{\mathcal T_{xz}^c}^{1-x\ldots1-x}})}\\
         &=\frac{1}{2^{n+1}}\sum_{z,x}\tr{(\ket{0_{\mathcal T_{xz}}}\braket{0_{\mathcal T_{xz}}}{0_{\mathcal T_{xz}}^{x\ldots x}}\bra{0_{\mathcal T_{xz}}^{x\ldots x}}\otimes
         \ketbra{0_{\mathcal T_{xz}^c}}{0_{\mathcal T_{xz}^c}}\mathbb I}\\
         &=\frac{1}{2^{n+1}}\left(\sum_{z,x-0}\tr{(\ket{0_{\mathcal T_{xz}}}\braket{0_{\mathcal T_{xz}}}{0_{\mathcal T_{xz}}}\bra{0_{\mathcal T_{xz}}}}+\sum_{z,x=1}\tr{(\ket{0_{\mathcal T_{xz}}}\braket{0_{\mathcal T_{xz}}}{0_{\mathcal T_{xz}}^{1\ldots 1}}\bra{0_{\mathcal T_{xz}}^{1\ldots 1}}}\right)\\
         &=\frac{1}{2^{n+1}}\left(\sum_z1+\sum_{t=0}^n\binom{n}{t}2^{-t}\right)=\frac{1}{2}+\frac{1}{2}\left(\frac{3}{4}\right)^n,
    \end{split}
\end{equation}
where we used that $\tr{(\ket{0_{\mathcal T_{xz}}}\braket{0_{\mathcal T_{xz}}}{0_{\mathcal T_{xz}}^{1\ldots 1}}\bra{0_{\mathcal T_{xz}}^{1\ldots 1}}}=\abs{\braket{0_{\mathcal T_{xz}}}{0_{\mathcal T_{xz}}^{1\ldots 1}}}=2^{-t}$. Similarly, one can see \eqref{eq:BC_breidbart_w}.

We see that for $n\geq 8$, the strategy $\mathcal{S}_{n-0}$ outperforms $(\mathcal S_{Q}^{BB84})^{\times n}$. Not only this, but the strategy $(\mathcal S_{Q}^{BB84})^{\times n}$ decays exponentially in $n$, whereas $\mathcal{S}_{n-0}$ has an average winning value of at least $\frac{1}{2}$. Therefore, we can conclude that 
\begin{equation}
     \omega_Q(\mathcal{G}^{n-\text{local guessing}})\geq\frac{1}{2}+\frac{1}{2}\left(\frac{3}{4}\right)^n.
\end{equation}

In \cite{Pital_a_Garc_a_2018} an upper bound  was shown  that can be directly transformed in an upper bound for $\mathcal{G}^{n-\text{local guessing}}$, which we state in the below proposition:

\begin{prop} \label{prop:bound_local_guess} For every $n\in\mathbb N$, the following bound holds
\begin{equation} \label{eq:bound_local_guess}
    \omega_Q(\mathcal{G}^{n-\text{local guessing}})\leq \frac{1}{2}+\frac{1}{2}\left(\frac{1+1/\sqrt{2}}{2}\right)^n. 
\end{equation}
\end{prop}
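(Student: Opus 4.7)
The plan is to decompose $\omega_n|_{\mathcal S}$ into a weighted sum of monogamy-of-entanglement-like quantities, and reduce the bound to a single nontrivial monogamy estimate. Fix any strategy $\mathcal S = (\rho_{RAB}, \{A^x_a\}, \{B^x_a\})$. First, starting from the definition of $\omega_n|_{\mathcal S}$, I would marginalize the referee's outcomes on the ``inactive'' positions (where $z_i \neq x$) freely, giving $\mathbb I$ per such qubit, and then sum over $z$: for each qubit $i$, the two choices $z_i = x$ and $z_i \neq x$ contribute $V^x_{a_i}$ and $\mathbb I$ respectively, so each per-qubit factor collapses to $V^x_{a_i} + \mathbb I$. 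This yields
\begin{equation*}
\omega_n|_{\mathcal S} = \frac{1}{2^{n+1}} \sum_{x,a} \tr{\bigotimes_i(V^x_{a_i} + \mathbb I) \otimes A^x_a \otimes B^x_a\, \rho}.
\end{equation*}

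The next step is to rewrite $V^x_{a_i} + \mathbb I = 2V^x_{a_i} + V^x_{\bar a_i}$ and distribute across the tensor product, giving $\bigotimes_i(V^x_{a_i} + \mathbb I) = \sum_{b \in \{0,1\}^n} 2^{n - d_H(a,b)} V^x_b$ with $V^x_b := \bigotimes_i V^x_{b_i}$. Reparametrizing by the error pattern $e := a \oplus b$, I obtain the key identity
\begin{equation*}
\omega_n|_{\mathcal S} = \frac{1}{2} \sum_{e \in \{0,1\}^n} 2^{-|e|}\, W_e, \quad W_e := \sum_{x \in \{0,1\},\, a \in \{0,1\}^n} \tr{V^x_{a \oplus e} \otimes A^x_a \otimes B^x_a\, \rho}.
\end{equation*}

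I would then combine two bounds on the $W_e$'s. The trivial bound $\sum_e W_e \leq 2$ follows immediately from $\sum_v V^x_v = \mathbb I_R$ and $\sum_a A^x_a \otimes B^x_a \preceq \mathbb I_{AB}$ (the latter by expanding $(\sum_a A^x_a)\otimes(\sum_a B^x_a) = \mathbb I$ and discarding positive cross-terms). The nontrivial bound is $W_0 \leq 2\cos^{2n}(\pi/8)$: indeed, $W_0/2$ is precisely the winning probability of the $n$-qubit \emph{same-basis} BB84 monogamy-of-entanglement game in which the referee measures all $n$ qubits in one uniform random basis $x \in \{0,1\}$ and announces $x$, and this estimate is the content of \cite{Pital_a_Garc_a_2018}. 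Setting $c := \cos^2(\pi/8) = (1+1/\sqrt 2)/2$ and using $2^{-|e|} \leq 1/2$ for every $e \neq 0$,
\begin{equation*}
\omega_n|_{\mathcal S} = \frac{1}{2}\left[W_0 + \sum_{e \neq 0} 2^{-|e|} W_e\right] \leq \frac{1}{4}\left[W_0 + \sum_e W_e\right] \leq \frac{1}{4}(2c^n + 2) = \frac{1 + c^n}{2},
\end{equation*}
and taking the supremum over $\mathcal S$ completes the proof.

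The main technical obstacle in this plan is the $n$-qubit same-basis monogamy bound $W_0 \leq 2c^n$: for a single qubit it is exactly Tomamichel et al.'s BB84 monogamy-of-entanglement bound, but lifting it to $n$ correlated qubits requires the inductive/combinatorial argument of \cite{Pital_a_Garc_a_2018}. The rest of the argument is a sequence of routine algebraic rewrites.
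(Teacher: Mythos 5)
The decomposition up to and including the identity
\begin{equation*}
\omega_Q(\mathcal{G}^{n\text{-local guessing}})\big|_{\mathcal S} = \frac{1}{2} \sum_{e \in \{0,1\}^n} 2^{-|e|}\, W_e, \qquad W_e := \sum_{x,\, a} \tr{(V^x_{a \oplus e} \otimes A^x_a \otimes B^x_a)\, \rho},
\end{equation*}
is correct, as is the trivial bound $\sum_e W_e \le 2$. However, the key ingredient $W_0 \le 2\cos^{2n}(\pi/8)$ is false for $n\ge 4$, so the proof does not go through. The quantity $W_0/2$ is the value of the ``same-basis'' monogamy game (referee measures all $n$ qubits in a single uniformly random basis $x\in\{0,1\}$), and that game admits the guessing strategy $\rho_R=\ketbra{0}{0}^{\otimes n}$, $A^x_a=B^x_a=\delta_{a,0\ldots 0}$, which gives $W_0 = 1+2^{-n}$. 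With $c=\cos^2(\pi/8)\approx 0.8536$ one has $1+2^{-n}>2c^n$ already at $n=4$ (and clearly at $n\ge 5$, e.g.\ $W_0=1.03125$ vs.\ $2c^5\approx 0.906$). Indeed, the final result only implies $W_0\le 1+c^n$, which is strictly weaker than $2c^n$; your bound on $W_0$ would have to be strictly stronger than the proposition itself. The attribution to \cite{Pital_a_Garc_a_2018} is also a misreading: that reference proves precisely the bound $\omega_Q\le\frac12+\frac12 c^n$ on the \emph{full} local-guessing game, not a bound on the same-basis game $W_0/2$ --- so the appeal is circular. The paper's actual proof (reproduced in the appendix) takes a different route entirely: it writes $\omega_Q\le \frac{1}{2^{n+1}}\bigl\|\sum_{z,x}\Pi^{zx}\bigr\|$ for suitably defined projectors $\Pi^{zx}$ and applies the mutually-orthogonal-permutations norm bound (Lemma 2 of \cite{TomamichelMonogamyGame2013}), bounding $\|\Pi^{zx}\Pi^{z'x'}\|$ by $2^{-i/2}$ when $x\ne x'$ and by $1$ when $x=x'$, which after counting yields exactly $\frac12+\frac12 c^n$. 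Your identity for $\omega_n$ is a nice reformulation, but you would need a genuinely different argument in place of the false $W_0$ bound to close it.
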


The proof in \cite{Pital_a_Garc_a_2018} is based on similar techniques as in the proof of Theorem 3. in \cite{TomamichelMonogamyGame2013}. For completeness, we show the proof in terms of extended non-local games in the Appendix.

In order to verify tightness of \eqref{eq:bound_local_guess}, we compute the SDPs providing the values of $\omega_{admiss}^{(k=1)}(\mathcal{G}^{n-\text{local guessing}})$ for $n=1,2,3,4,5$, see Fig.~\ref{Fig:local_guess_game}. We see that using SDPs, we obtain tighter upper bounds for the above mentioned values of $n$. 

\begin{figure}[h]
\centering
\includegraphics[width=105mm]{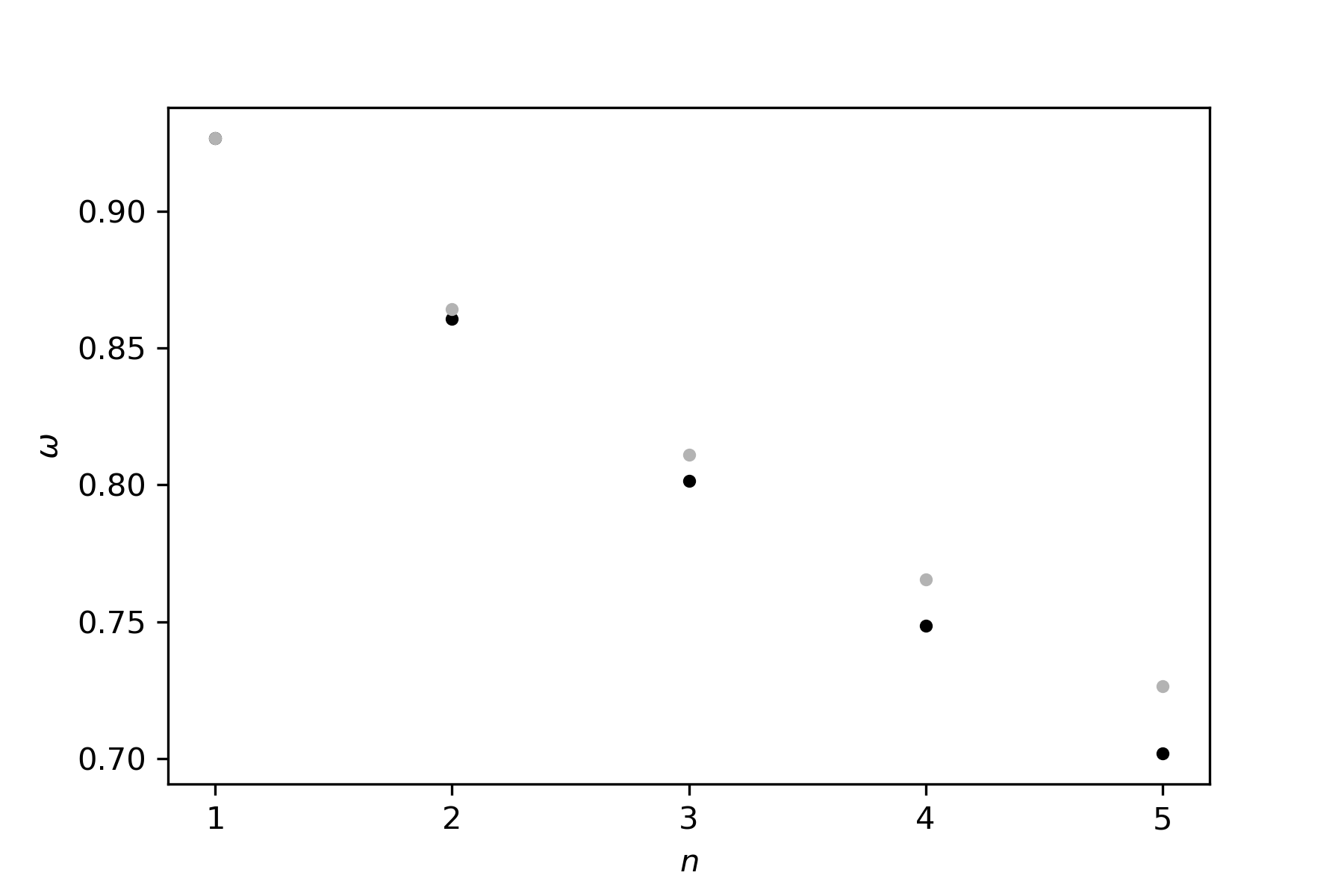}
\caption{Upper bounds in Proposition \ref{prop:bound_local_guess}, given by \eqref{eq:bound_local_guess}, (gray dots) and upper bounds obtained by $\omega_{admiss}^{(k=1)}(\mathcal{G}^{n-\text{local guessing}})$ (black dots).}
\label{Fig:local_guess_game}
\end{figure}

\subsection{Lossy monogamy-of-entanglement games}\label{section:lossy_moe_games}

Consider the lossy-and-constrained version of $\mathcal{G}^{BB84}$, denoted by $\mathcal{G}_{C,\eta}^{BB84}$, where the set of constraints $C$ is given as in the previous case by imposing that Alice and Bob never answer differently, i.e.\ imposing that their probability of answering different is 0. In this case, the set of constraints $C$ is given by: for every  strategy $S_{Q}=\{\rho_{RAB},A^x_a,B^y_b\}_{x,y,a,b}$,
\begin{equation}
    \sum_{x,a,a'\neq b'}q(x)\tr{(V^{x}_{a}\otimes A^x_{a'} \otimes B^x_{b'}) \rho_{RAB}}=0.
\end{equation}
In \cite{Escol_Farr_s_2023} tight results for the optimal winning probability of this lossy-and-constrained game were found by an \emph{ad hoc} method combining the `$1+AB$' level of the NPA hierarchy \cite{NPA2008} with extra linear constraints derived from bounding operator norms. 
We show that solving the SDP \eqref{eq:SDP} for $k=`1+AB'$ for $\mathcal{G}_{C,\eta}^{BB84}$,  see code \cite{code}, it confirms the results in~\cite{Escol_Farr_s_2023}, showing that the optimal value is achieved by the strategy given by the convex combination of $S_{Q}^{BB84}$ and $S_{Q}^{guess}=\{\ketbra{0}{0},A_a^0=\delta_{a0},A_a^1=\delta_{a\perp},B_a^0=\delta_{a0},B_a^1=\delta_{a\perp}\}$, which is the strategy where Alice and Bob make the guess $x=0$ for the referee's measurement, and they answer $0$ if the guess was right, and they answer $\perp$ if their guess was wrong. See Fig.~\ref{Fig_bb84_eta} for a plot of the results. We see that level 1, $\omega_{admiss}^{(k=1)}$, provides a good approximation, whereas level 1 in \cite{Escol_Farr_s_2023} is far from the optimal value. The main advantage is that it is enough to have the description of the game to find the value and there is no need to derive extra linear constraints. Moreover, we know that $\omega_{admiss}^{(k)}$ will converge to the true value, whereas it was not necessarily the case in the \emph{ad hoc} method in \cite{Escol_Farr_s_2023}. This becomes more clear in the following case. 

\begin{figure}[h]
\centering
\includegraphics[width=125mm]{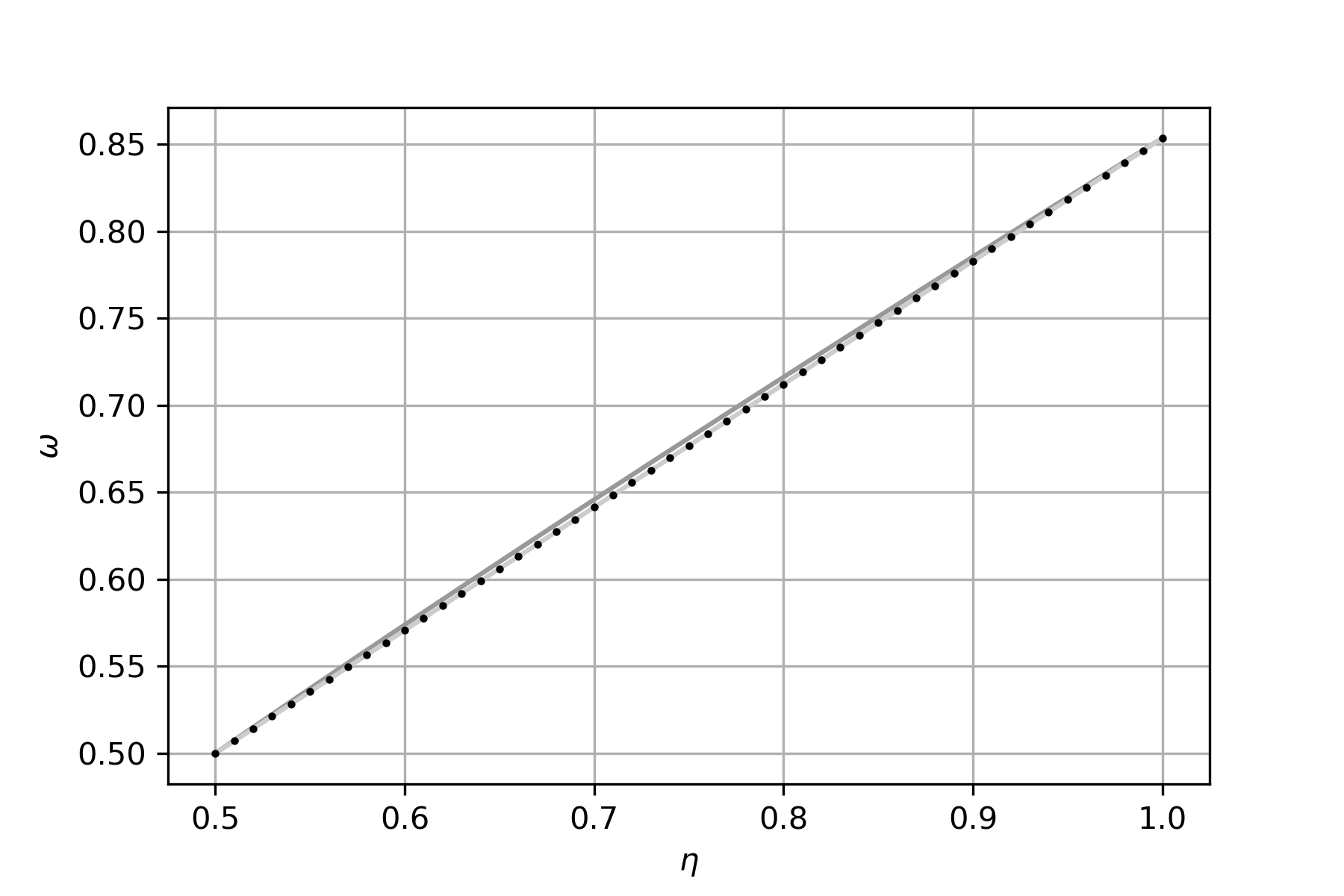}
\caption{Solutions of $\omega_{admiss}^{(k=1)}(\mathcal{G}_{C,\eta}^{BB84})$ and $\omega_{admiss}^{(k=`1+AB')}(\mathcal{G}_{C,\eta}^{BB84})$, represented by dark grey continuous line and black dots, respectively, together with the winning probability of the strategy given by the convex combination of the strategies  $S_{Q}^{BB84}$ and $S_{Q}^{guess}$, represented by a light grey continuous line. }
\label{Fig_bb84_eta}
\end{figure}

Consider the extended non-local game $\mathcal{G}^{3-bases}$ described by
\begin{equation}
    \mathcal{G}^{3-bases}:=\left(q(x)=\frac{1}{\abs{\mathcal{X}}},V=\{V^{x}_{a}=\ketbra{a_x}{a_x}\}_{x,a}, W=\{(x,v,a,b)\mid v=a=b\}_{x,v,a,b}\right),
\end{equation}
with $\mathcal X=\{0,1,2\}$ and $\mathcal A=\{0,1\}$, where $\ket{a_0}=\ket{a}$, $\ket{a_1}=H\ket{a}$, and $\ket{a_2}=\ket{i}$ if $a=0$ and $\ket{a_2}=\ket{-i}$ if $a=1$. This game is similar to $\mathcal{G}^{BB84}$ but the referee, instead of measuring his local register in either the computational or the Hadamard basis, can also measure in the basis $\{\ket{\pm i}\}$. The lossy-and-constrained version of this game, $\mathcal{G}_{C,\eta}^{3-bases}$, where $C$ is given by forbidding different answers from Alice and Bob,  was analyzed in \cite{Escol_Farr_s_2023} with the \emph{ad hoc} method providing upper bounds on $\omega_Q(\mathcal{G}_{\eta}^{3-bases})$ for $\forall\eta\in[0,1]$.  Nevertheless, it was open if those were tight. Here we show that solving the SDPs for $\omega_{admiss}^{k}(\mathcal{G}_{C,\eta}^{3-bases})$ gives better upper bounds, see Fig. \ref{Fig:3_bases}. In addition, we will see that they are tight. Moreover, we computed the values for  $\omega_{admiss}^{k=1}(\mathcal{G}_{\eta}^{3-bases})$, i.e.\ without the constraint of imposing same answers for Alice and Bob, and we obtained the same value. In search of optimal strategies for every $\eta$, this fact lead us to think of strategies that always coordinate the answers of Alice and Bob, consisting on preparing a concrete qubit and pre-agreeding a fix answer regardless the question they receive. 

Consider the strategies 
\begin{equation}
    S_Q^{3-guess}=\{\ketbra{0}{0},A_a^0=\delta_{a0},A_a^1=A_a^2=\delta_{a\perp},B_a^0=\delta_{a0},B_a^1=B_a^2=\delta_{a\perp}\},
\end{equation}
which consists on guessing $x=0$ and answering $\perp$ (photon loss) if the guess was wrong, 
\begin{equation}
    S_{Q}^{3-BB84}=\{\ketbra{\phi}{\phi},A_a^0=A_a^1=\delta_{a0},A_a^2=\delta_{a\perp},B_a^0=B^1_a=\delta_{a0},B_a^2=\delta_{a\perp}\},
\end{equation}
consisting on using $S_{Q}^{BB84}$ if $x\in\{0,1\}$ and claiming photon loss if $x=2$,
and 
\begin{equation}
    S_{Q}^{3-bases}=\{\ketbra{\phi_3}{\phi_3},A_a^x=\delta_{a0},B_a^x=\delta_{a0}\},
\end{equation}
where   $  \ket{\phi_3}=\cos\left(\frac{\tan^{-1}(\sqrt{2})}{2}\right)\ket{0}+e^{i\frac{\pi}{4}}\sin\left(\frac{\tan^{-1}(\sqrt{2})}{2}\right)\ket{1}$. Notice that $\ket{\phi_3}$ is the state that has simultaneous maximum overlap with $\ket0,\ket+$ and $\ket{i}$, i.e. 
\begin{equation}
    \max_{\ket\varphi}\left\{\abs{\braket{0}{\varphi}}^2+\abs{\braket{+}{\varphi}}^2+\abs{\braket{i}{\varphi}}^2\right\}=\abs{\braket{0}{\phi_3}}^2+\abs{\braket{+}{\phi_3}}^2+\abs{\braket{i}{\phi_3}}^2.
\end{equation}

\begin{result}\label{result:3_bases}
    The strategy $\mathcal{S}_{Q}^{3\eta}$ consisting of Alice and Bob playing:
    \begin{itemize}
        \item the convex combination of $ S_Q^{3-guess}$ and $S_{Q}^{3-BB84}$, for $\eta\in[\frac{1}{3},\frac{2}{3})$, 
        \item the convex combination of  $S_{Q}^{3-BB84}$ and $ S_Q^{3-bases}$, for $\eta\in[\frac{2}{3},1]$ 
    \end{itemize}
     is optimal for $\mathcal{G}_{\eta}^{3-bases}$ with
     \begin{equation}
         \omega_Q(\mathcal{G}_{\eta}^{3-bases})=\begin{cases}
             &\frac{1}{3}+\frac{1}{\sqrt{2}}(\eta-\frac{1}{3}) \text{ if } \eta\in[\frac{1}{3},\frac{2}{3}),\\
             &\left(\frac{1}{\sqrt{2}}-\frac{1}{\sqrt{3}}\right)+\frac{1-\sqrt{2}+\sqrt{3}}{2}\eta \text{ if } \eta\in[\frac{2}{3},1].
         \end{cases}
     \end{equation}
\end{result}
The strategy  $\mathcal{S}_{Q}^{3\eta}$ matches the upper bound obtained by the SDP giving the value of $\omega_{admiss}^{(k)}(\mathcal{G}_{\eta}^{3-bases})$, where $k=`1+AB'$ for $\eta\in[\frac{1}{3},\frac{2}{3})$ and $k=1$  for $\eta\in[\frac{2}{3},1]$. We want to highlight that $k=1$ for $\eta\in[\frac{1}{3},\frac{2}{3})$ has solutions that are almost equal for $k=`1+AB'$ and just diverge in the third decimal. Notice that $\mathcal{S}_{Q}^{3\eta}$ slightly outperforms the strategy given by the convex combination of $ S_Q^{3-guess}$ and $S_{Q}^{3-bases}$, see blue line in Fig.~\ref{Fig:3_bases}. 

\begin{figure}[h]
\centering
\includegraphics[width=125mm]{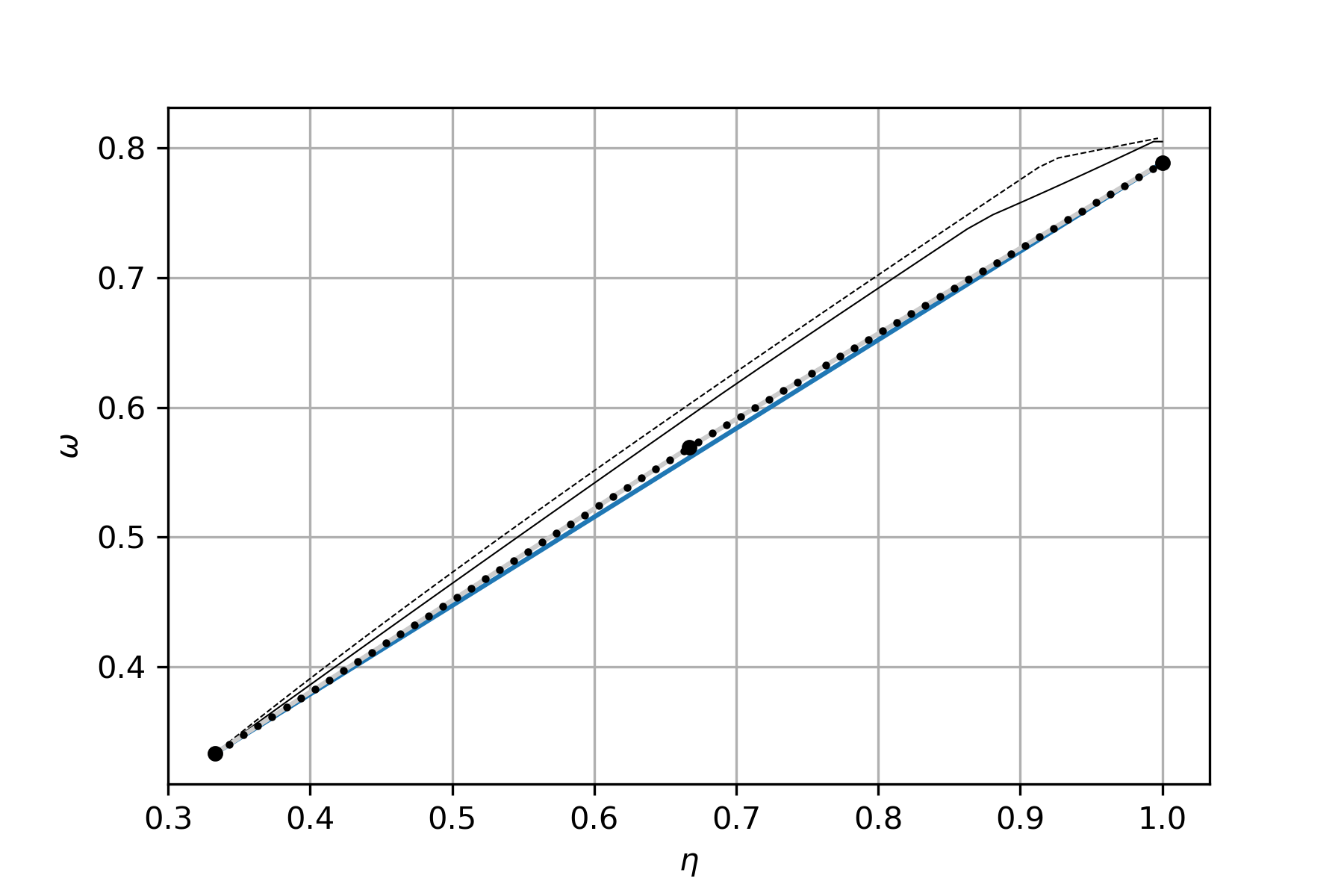}
\caption{The dotted and continuous lines correspond to the upper bounds obtained in \cite{Escol_Farr_s_2023} with the \emph{ad hoc} method using the $1$st and $2$nd levels of the NPA hierarchy, respectively. The black dots correspond to the values of $\omega_{admiss}^{(k)}(\mathcal{G}_{\eta}^{3-bases})$, where $k=`1+AB'$ for $\eta\in[\frac{1}{3},\frac{2}{3})$ and $k=1$  for $\eta\in[\frac{2}{3},1]$ , the gray line correspond to the winning values of the strategy   $\mathcal{S}_{Q}^{3\eta}$, the blue line corresponds to the winning value of the strategy given by the convex combination of $ S_Q^{3-guess}$ and $S_{Q}^{3-bases}$, and the big black dots correspond to the winning probabilities of the strategies $ S_Q^{3-guess}$, $S_{Q}^{3-BB84}$ and  $ S_Q^{3-bases}$, respectively.}
\label{Fig:3_bases}
\end{figure}

Unlike for $\mathcal{G}^{BB84}_{\eta}$, the optimal strategy for $\mathcal{G}_{\eta}^{3-bases}$ for every $\eta$ is not just given by simply playing the optimal strategy (for $\eta=1$) with a certain frequency and combining it with the guessing strategy, claiming photon loss enough times to be consistent with $\eta$. This is summarized in the below observation:

\begin{observation}
    The optimal winning probability of a lossy extended non-local game is not always given by the convex combination of the optimal strategy for $\eta=1$ and the guessing attack.
\end{observation}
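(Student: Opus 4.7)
The plan is to exhibit an explicit counterexample using the game $\mathcal{G}_{\eta}^{3-bases}$ from Section~\ref{section:lossy_moe_games}. At $\eta=1$ the optimal strategy is $S_{Q}^{3-bases}$ (Alice and Bob share $\ket{\phi_3}$ and always answer $0$), while the canonical guessing attack in the sense of Remark~\ref{rem:guess} is $S_{Q}^{3-guess}$, which operates at $\eta = 1/3$ with winning probability $1/3$. Any convex combination of these two strategies parametrised by $p \in [0,1]$ yields loss rate $\eta(p) = (1-p)/3 + p$ and a winning probability that is affine in $p$; in the $(\eta,\omega)$-plane this traces the single straight segment joining $(1/3, 1/3)$ to $\bigl(1,\tfrac{1}{2}+\tfrac{\sqrt{3}}{6}\bigr)$.

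Next, this segment is to be compared against the piecewise-linear optimum provided by Result~\ref{result:3_bases}. It suffices to evaluate at the interior corner $\eta = 2/3$: the naive convex combination, taken with $p = 1/2$, gives $\tfrac{1}{2}\cdot\tfrac{1}{3} + \tfrac{1}{2}\bigl(\tfrac{1}{2}+\tfrac{\sqrt{3}}{6}\bigr) = \tfrac{5 + \sqrt{3}}{12} \approx 0.561$, whereas the strategy $S_{Q}^{3-BB84}$, which itself has $\eta = 2/3$, achieves $\omega_{Q} = \tfrac{1}{3} + \tfrac{1}{3\sqrt{2}} \approx 0.569$. The elementary arithmetic inequality $\tfrac{1}{3} + \tfrac{1}{3\sqrt{2}} > \tfrac{5+\sqrt{3}}{12}$ then certifies that the naive convex combination is strictly outperformed, which is precisely what the observation asserts.

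The only substantive input beyond this arithmetic is the claim that the piecewise-linear curve in Result~\ref{result:3_bases} really is the optimum value of $\omega_{Q}(\mathcal{G}_{\eta}^{3-bases})$ and not just a lower bound from a particular strategy. This is where the main difficulty actually lives, and it is already discharged by matching the explicit strategy $S_{Q}^{3\eta}$ against the SDP upper bound $\omega_{admiss}^{(k)}(\mathcal{G}_{\eta}^{3-bases})$ coming from the hierarchy of Section~\ref{section:convergence_proof} at levels $k=1$ and $k={}$`$1+AB$'. Granted that certification, the observation follows immediately from the single numerical inequality above, with $\mathcal{G}_{\eta}^{3-bases}$ serving as the witnessing game and the corner $\eta=2/3$ as the witnessing loss rate.
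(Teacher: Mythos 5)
Your proof is correct and is essentially the paper's own argument: the paper supports the observation by pointing to $\mathcal{G}_{\eta}^{3-bases}$ and noting, with Figure~\ref{Fig:3_bases}, that the strategy $\mathcal{S}_{Q}^{3\eta}$ (with the interior corner at $\eta=2/3$ given by $S_{Q}^{3-BB84}$) strictly outperforms the straight-line convex combination of $S_Q^{3-guess}$ and $S_{Q}^{3-bases}$. You have simply made that comparison explicit arithmetically at $\eta=2/3$ via $\tfrac{1}{3}+\tfrac{1}{3\sqrt{2}} > \tfrac{5+\sqrt{3}}{12}$, which is exactly the content of the paper's remark preceding the observation.
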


\section{Applications to quantum cryptography}\label{sec:applications}
\subsection{Application to relativistic Bit Commitment (with loss)}\label{sec:rel_BC}

Bit commitment (BC) is a cryptographic two-party primitive where a party Charlie commits to a bit $x\in\{0,1\}$ relative to another party $R$ in a way that is both binding, in the sense that Charlie can not change the value of $x$ afterward, and hiding, in the sense that $R$ cannot know the value of $x$ before Charlie unveils it.  The standard description of bit commitment protocols consist of the following phases: commit (Charlie commits to $x$), wait (no information is shared between parties), and open (Charlie unveils $x$ to R). In \cite{PhysRevLett.78.3410,PhysRevLett.78.3414} the impossibility of unconditional security of bit commitment was shown, even for protocols constructed using quantum information.  

In \cite{Kent_2012_BC}, Kent introduced a BC protocol, which we denote by BC$^{n-BB84}$, in which Charlie is forced by the spacetime constraints in a Minkowski spacetime to commit $x$ at a given spacetime point. Unconditional security in the asymptotic limit (in $n$, see below) was proven in \cite{Kaniewski_2013}, with tighter results for every $n$ in \cite{Pital_a_Garc_a_2018}. In this section, we show that security of this protocol can be reduced to an extended non-local game and, by computing upper bounds as in Section~\ref{section:convergence_proof}, we find tighter results (for small values of $n$). We show security for small instances of $n$ when loss is taken into consideration.

The BC$^{n-BB84}$ protocol is described as follows: let $(s,t)$ be a set of coordinates for $(1+1)$-dimensional Minkowski spacetime and consider the points $P=(0,0)$, $Q_0=(-1,1)$ and $Q_1=(1,1)$. Notice no information sent from $P_t=(0,t)$ for $t>0$ can reach both $Q_0$ and $Q_1$. The referee $R$ prepares the state $\ket\psi=\ket{\psi_1}\otimes \ldots \otimes \ket{\psi_n}$, where $\ket{\psi_i}\in\{\ket0,\ket1,\ket+,\ket-\}$ are BB84 states, and sends $\ket\psi$ to Charlie, who, at $P$, commits the bit $x$ by measuring each qubit he received in basis $x\in\{0,1\}$, obtaining $b\in\{0,1\}^n$. Charlie sends $x,b$ to two agents, Alice and Bob, who unveil $x$ (and $b$) at $Q_0$ and $Q_1$, respectively. The referee checks if both $x,b$ from $Q_0$ and $Q_1$ are the same and are consistent with the states he prepared. 

Security for dishonest referee and honest Charlie follows from the fact that the referee cannot learn anything about $x$ by no-signalling.

In order to study security for honest referee and dishonest Charlie (and his agents), we introduce a purified version of  BC$^{n-BB84}$, in a similar way as in \cite{Kaniewski_2013}, described as follows:
\begin{enumerate}
    \item $R$ prepares $n$ EPR pairs $\otimes_{i=1}^n\ket{\Phi^+}_{R_iC_i}$ and sends one half of each pair to Charlie ($C$), i.e.\ registers $C_1\ldots C_n$ and keeps the other halves (registers $R_1\ldots R_n$).
    \item Charlie ($C$) commits to a bit $x$ by measuring the qubits he received in basis $x\in\{0,1\}$, obtaining $b\in\{0,1\}^n$ (committing phase), and broadcasts $x$ and $b$ to two non-communicating agents Alice ($A$) and Bob ($B$).
    \item Alice and Bob send $x$ and $b$ to R (opening phase).
    \item R picks a uniformly random bit string $z\in\{0,1\}^n$ and measures the qubit in register $R_i$ in the basis $z_i$ for all $i\in[n]$, obtaining $r\in\{0,1\}^n$. 
    \item R checks if  
    \begin{itemize}
        \item he received the same $x$ and $b$ from Alice and Bob,
        \item for all $i$ such that $z_i=x$, it should hold that $r_i=b_i$. We will denote $(x,z,r,b)$ such that the above hold as $W$, the \emph{winning set}. 
    \end{itemize} 
    If the checks pass, the opening is accepted. 
\end{enumerate}
The only difference with the purification of BC$^{n-BB84}$ in \cite{Kaniewski_2013} is that, in our case, the referee measures each qubit in a uniformly random basis $z_i\in\{0,1\}$ instead of measuring half of the $n$ qubits in the computational basis and the other half in the Hadamard basis. Our purification corresponds to Kent's protocol \cite{Kent_2012_BC}, since all the BB84 states are in a uniformly random basis, instead of half in computational and half in Hadamard as in the variation in \cite{Kaniewski_2013}.

The average probability of accepting the commitment is given by 
\begin{equation}
    \omega_{accept}=\frac{1}{2}(p_0+p_1),
\end{equation}
where $p_0$ and $p_1$ denote the probability of accepting the commitment $x=0$ and $x=1$, respectively. For an honest (and perfect) implementation of the protocol,
\begin{equation}
    p_{x}=\frac{1}{2^n}\sum_{(x,z,r,b)\in W}\tr{(V^{z_0}_{r_0}\otimes\cdots\otimes V^{z_{n-1}}_{r_{n-1}})\otimes (V^{x}_{b_0}\otimes\cdots \otimes V^{x}_{b_{n-1}})(\ket{\Phi^+} \bra{\Phi^+})^{\otimes n}}, \text{ for } x\in\{0,1\},
\end{equation}
where $V^{y}_{a}=H^{y}\ketbra{a}{a}H^y$ for $a,y\in\{0,1\}$. Then, the average probability of accepting is 
\begin{equation}
    \omega_{accept}^{honest}=\frac{1}{2}\sum_x p_x=1.
\end{equation}

See Fig.~\ref{fig:BC_scheme}~(a) for a schematic representation of BC$^{n-BB84}$. In Fig.~\ref{fig:BC_scheme}, we place $R$ in the same location as $B'$ in the past, but $R$ can be anywhere in the past of $C$. 

\begin{figure}[H]
\centering
\begin{subfigure}{0.48\textwidth}
    \centering
    \includegraphics[width=\textwidth]{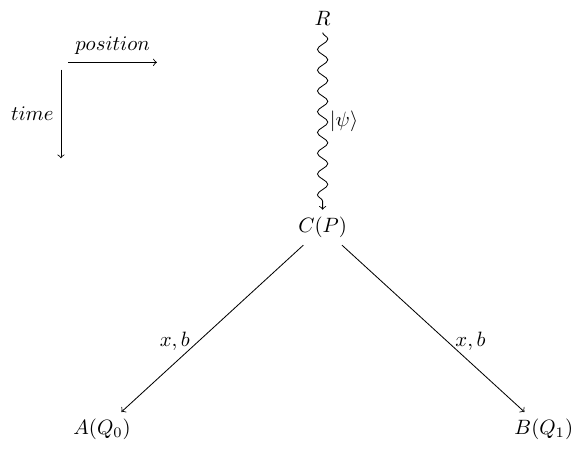}
    \caption{Honest implementation}
\end{subfigure}
\hfill
\begin{subfigure}{0.48\textwidth}
    \centering
    \includegraphics[width=\textwidth]{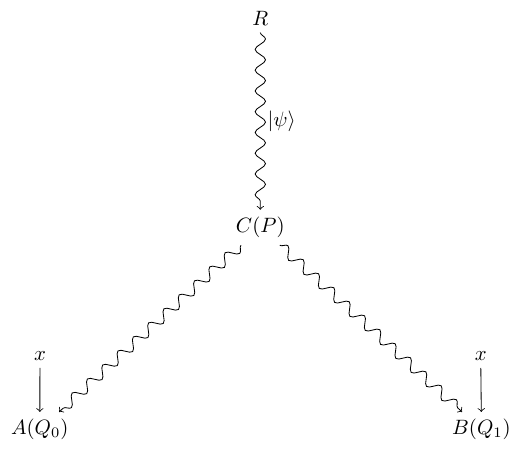}
    \caption{Dishonest implementation}
\end{subfigure}
\caption{Scheme of an honest (a) and dishonest (committer) implementation (b) of the BC$^{n\text{-}BB84}$ protocol.}
\label{fig:BC_scheme}
\end{figure}

When transmitting the quantum information from $R$ to $C$, for example, by sending it through a quantum channel connecting $R$ to $C$, or by $R$ `delivering' it to $C$ such that $C$ has to keep it in a quantum memory till the committing phase, a certain fraction of qubits might be lost. Let $\eta$ denote the transmission rate per qubit, then with probability $1-\eta$, in the commitment phase Charlie will obtain $b_i=\perp$ for all $i\in[n]$. We introduce BC$_{\eta}^{n-BB84}$  as the lossy version of BC$^{n-BB84}$ with transmission rate per qubit $\eta$, where the difference is that $b\in\{0,1,\perp\}^n$ and the check that R performs consists of
\begin{itemize}
    \item verifying that he received the same $x$ and $b$ from Alice and Bob,
    \item verifying consistency with the measurement outcomes, i.e. 
    \begin{equation}\label{eq:BC_consistency_answers}
        \text{for all }i \text{ such that }z_i=x,\text{ and }b_i\neq\perp, \text{ }r_i=b_i,
    \end{equation}
    \item verifying consistency with the transmission rate, i.e. 
    \begin{equation}\label{eq:BC_consistency_eta}
        \abs{\{i\mid b_i\neq\perp\}}\in [n\eta-\sigma_{\eta},n\eta+\sigma_{\eta}],
    \end{equation}
     for a certain $\sigma_{\eta}>0$.
\end{itemize}
If the checks pass, the opening is accepted. 
The set of tuples $(x,z,r,b)$ that pass the test (winning set) is given by
\begin{equation}
    W_\eta=\{(x,z,r,b)\in\{0,1\}\times\{0,1\}^n\times\{0,1\}^n\times\{0,1,\perp\}^n\mid \text{ eqs. } \eqref{eq:BC_consistency_answers}\text{ and }\eqref{eq:BC_consistency_eta} \text{ hold.}\},
\end{equation}
i.e., if the answers she received are such that $(x,z,r,b)\in W_\eta$, the bit $x$ is accepted. 

For an honest execution of the protocol, we consider the lossy channel that acts on each qubit by leaving it invariant with probability $\eta$, and losing it with probability $1-\eta$. The honest average winning probability can be expressed as 
\begin{equation}
    \omega_{accept}^{honest}=\frac{1}{2}\frac{1}{2^n}\sum_{(x,z,r,b)\in W_\eta}\tr{(V^z_r\otimes M^x_b)(\ket{\Phi^+} \bra{\Phi^+})^{\otimes n}},
\end{equation}
where $M^x_{b}=M^x_{b_0}\otimes\cdots\otimes M^x_{b_{n-1}}$, for
\begin{equation}
    M^x_{b_i}=\begin{cases}
        \eta H^{x}\ketbra{b_i}{b_i}H^{x} &\text{ if } b\in\{0,1\},\\
        (1-\eta)\mathbb I_{2}&\text{ if } b=\perp.
    \end{cases}
    \hspace{ 0.5cm }\forall i\in[n],
\end{equation}
i.e.\ in each qubit Charlie either performs the same measurement as the referee with probability $\eta$, or the photon is lost with probability $1-\eta$. For $n$ large enough, by the central limit theorem, the commitment will be accepted with probability arbitrarily close to 1 (with the value determined by $\sigma_\eta$). 

Moreover, in an experimental implementation, an honest Charlie will have a certain error. Let $\varepsilon$ be the error per qubit that he is assumed to have. Then, taking this into account, we introduce the loss-and-error version of BC$^{n-BB84}$ as BC$_{\eta}^{n-BB84}$ where the check that R performs consists of the same check as in the lossy version and, in addition, she verifies the consistency with the error, i.e.,
    \begin{equation}\label{eq:BC_consistency_error}
        \varepsilon-\sigma_{\varepsilon}\leq \frac{\abs{\{i\mid z_i=x \text{ and } b_i\neq \perp \text{ and } b_i\neq r_i\}}}{\abs{\{i\mid z_i=x \text{ and } b_i\neq \perp \}}}\leq \varepsilon+\sigma_{\varepsilon},
    \end{equation}
     for a certain $\sigma_{\varepsilon}>0$.

\subsubsection{Security for dishonest Charlie}

We analyze security in the \emph{global command model} \cite{Kaniewski_2013}, where an external agent dictates the bit that should be unveiled to Alice and Bob at latest at $t=1$. The most general thing a dishonest Charlie can do is, during the commit phase, to apply an arbitrary completely positive trace-preserving (CPTP) map to his halves of EPR pairs and possibly some ancillary qubits and send a register to Alice and another register to Bob. Then, upon receiving the bit $x$ dictated by an external agent, Alice and Bob perform POVMs $\{A^x_a\}$ and $\{B^x_b\}$ to answer $x,a$ and $x,b$, respectively. 

Security of BC$_{\eta}^{n-BB84}$ can be reduced to the winning probability of the extended non-local game `Local guessing game' $\mathcal G^{n-\text{local guessing}}$, described in Section~\ref{section:local_guess_game}, where the referee, Alice and Bob share a tripartite state $\rho_{RAB}$, and Alice and Bob answer $x,a$ and $x,b$ such that $a=b$. Upon the choice of measurement $z\in\{0,1\}^n$ with outcome $r\in\{0,1\}^n$, they win the game if the tuple $(x,z,r,b)\in W_{\eta}$. 
The optimal winning probability is given by 
\begin{equation}\label{eq:attack_BC}
    \omega_{Q}=\sup\frac{1}{2}\frac{1}{2^n}\sum_{(x,z,r,b)\in W_\eta}\tr{(V^z_r\otimes A^x_b\otimes B^x_b)\rho_{RAB}}=\frac{1}{2}(p^{attack}_0+p^{attack}_1)=: \omega_{accept}^{attack},
\end{equation}
where $V^z_r=\ketbra{r_0^{z_0}}{r_0^{z_0}}\otimes\cdots\otimes \ketbra{r_{n-1}^{z_{n-1}}}{r_{n-1}^{z_{n-1}}}$, and $p_{x}^{attack}=\frac{1}{2^n}\sum_{(x,z,r,b)\in W}\tr{(V^z_r\otimes A^x_b\otimes B^x_b)\rho_{RAB}}$ denotes the probability that $R$ accepts the commit $x$ (sent by the attackers).

Notice that a dishonest implementation of the protocol will always have average winning probability of at least $\frac{1}{2}$, since a dishonest Charlie can pick $x$, measure all the qubits he received at $P$ in basis $x$ (as in an honest implementation) and broadcast the outcome to Alice and Bob. If the external agent picks the same value $x$, the commitment will be accepted, thus, 
\begin{equation}
    \omega_{accept}^{attack}=\frac{1}{2}+\xi,
\end{equation}
for a certain bias $\xi\geq0$. Phrasing the attack in terms of an extended non-local game, the strategy $\mathcal{S}_{n-0}=\{\ket{0}^{\otimes n}, A^x_a=\delta_{a,0\ldots 0}, B^y_b= \delta_{b,0\ldots 0}\}$ introduced in Section \ref{section:local_guess_game} is such that 
\begin{equation}
     \omega_{accept}^{attack}\big |_{\mathcal{S}_{n-0}}=\frac{1}{2}(p^{attack}_{x_0}+p^{attack}_{1-x_0})=\frac{1}{2}+\frac{1}{2}\left(\frac{3}{4}\right)^n.
\end{equation}
Therefore, we have that 
\begin{equation}
    \omega_{accept}^{attack}\geq \frac{1}{2}+\frac{1}{2}\left(\frac{3}{4}\right)^n.
\end{equation}
In \cite{Kaniewski_2013}, security was shown with
\begin{equation}
   \omega_{accept}^{attack}\leq\frac{1}{2} +\inf_{\delta\in(0,\frac{1}{2})}2^{-n(1-h(\delta))}+ e^{-\frac{1}{2}n\delta^2},
\end{equation}
which decays exponentially for $n\rightarrow\infty$. This result was tightened in \cite{Pital_a_Garc_a_2018}, showing that, from  Proposition~\ref{prop:bound_local_guess},
\begin{equation}
      \omega_{accept}^{attack}\leq\frac{1}{2} +\frac{1}{2}\left(\frac{1+1/\sqrt{2}}{2}\right)^n. 
\end{equation}

We analyze security for small values of $n$ by computing upper bounds of \eqref{eq:attack_BC} solving the SDP that gives us the upper bound $\omega_{admiss}^{(1)}$ for fixed values of $n$, see code \cite{code}. 

For the lossless and error-free case, we have the bounds plotted in Fig.~\ref{Fig:local_guess_game}, where we find tighter bounds than in \cite{Pital_a_Garc_a_2018} for small values of $n$.

Now, for small $n$, we consider the scenario where a certain number of photon loss answers are accepted. Since the acceptance criterion will be based on a few qubits, we only consider the error-free case.  We analyze the following cases:

\begin{enumerate}[label=\textit{(\roman*)}]
    \item $n=1$. The referee has to either accept or reject the commitment based on a single bit. In such a case, we consider the error-free case where only non no-photon answers are accepted, i.e.\ $W=\{(x,z,r,b)\mid \text{ if } z=x, \text{ then }r=b\neq\perp\}$. Then, 
    \begin{equation}
        \omega_{Charlie}=\eta \hspace{1cm}\text{ and } \hspace{1cm}\omega_{admiss}^{(1)}=0.926776695\simeq\frac{1+\cos^2\frac{\pi}{8}}{2}.
    \end{equation}
    
    From Section \ref{section:local_guess_game}, we know that the result $\omega_{admiss}^{(1)}$ is tight. In this case, an honest Charlie would pass the test with higher probability than potential attackers as long as $\eta>0.926776695$. There is no point in considering accepting photon loss as a valid answer for $n=1$, since the attackers can then always claim photon loss and be accepted with probability~1. 
    \item $n=2$. The referee has to either accept or reject the commitment based on two bits. In such a case, we consider two possible winning sets $W$ given by
    \begin{itemize}
        \item the set of two answers such that do not contain $\perp$ (no photon loss answers accepted), then 
        \begin{equation}
        \omega_{Charlie}=\eta^2 \hspace{1cm}\text{ and } \hspace{1cm}\omega_{admiss}^{(1)}=0.8607577 .
    \end{equation}
    In this case, an honest Charlie would pass the test with higher probability than potential attackers as long as $\eta>0.92777$. 
        \item the set of two answers such that at least one is not $\perp$. In such a case, $\omega_{admiss}^{(1)}=1.000$ provides a trivial upper bound.

    \end{itemize}
     \item $n=3$. The referee has to either accept or reject the commitment based on three bits. In such a case, we consider two possible winning sets $W$ given by
    \begin{itemize}
        \item the set of two answers such that do not contain $\perp$ (no photon loss answers accepted), then 
        \begin{equation}
        \omega_{Charlie}=\eta^3\hspace{1cm}\text{ and } \hspace{1cm}\omega_{admiss}^{(1)}=0.8014794 %\lle{ 0.8014794471123362}.
    \end{equation}
     In this case, an honest Charlie would pass the test with higher probability than potential attackers as long as $\eta>0.928892$. 
        \item the set of two answers such that at least one is not $\perp$, then
        \begin{equation}
        \omega_{Charlie}= (3-2\eta)\eta^2\hspace{1cm}\text{ and } \hspace{1cm}\omega_{admiss}^{(1)}=0.927699%\lle{0.9276990736900129}.
    \end{equation}
    In this case, an honest Charlie would pass the test with higher probability than potential attackers as long as $\eta>0.835472$.

     \end{itemize}
\end{enumerate}

\subsection{Application to device-independent quantum key distribution}\label{sec:QKD}

In \cite{TomamichelMonogamyGame2013}, security of one-sided device-independent quantum key distribution (DIQKD) BB84 \cite{BB84} was proven using a monogamy-of-entanglement game. 
In order to reduce an attack on the protocol to a monogamy-of-entanglement game, we consider an entanglement-based variant of the original BB84 protocol, which implies security of the latter \cite{BB84withoutBell}. 

We consider a simplified version omitting information reconciliation and privacy amplification, in which two parties, a referee (usually in the literature, Alice) and Bob want to establish a secret key, tolerating a measurement error $p_{err}$ per qubit:
\begin{itemize}
    \item The referee prepares $n$ EPR pairs, keeps one half and sends the other half to Bob from each EPR pair. Bob confirms he received the $n$ qubits. 
    \item The referee picks a random basis $x_i$, either computational or Hadamard, to measure each qubit, sends $x_0,\ldots,x_{n-1}$ to Bob and both measure, obtaining $a=a_0 \dots a_{n-1}$ and $b=b_0 \dots b_{n-1}$, respectively. 
\end{itemize}
The key is successfully generated, up to information reconciliation and privacy amplification, if $d_{H}(a,b)\leq n p_{err}$.

The security of this protocol for untrusted Bob's measurement device, which could behave maliciously, can be reduced to the monogamy-of-entanglement game where Alice's goal is to guess the value of the referee's raw key~$a$ \cite{TomamichelMonogamyGame2013}, where it is proven that the protocol can tolerate a noise up to 1.5\% asymptotically.  

Here we consider a similar approach by first fixing the error $p_{err}$ and then seeing what is the probability that Alice guesses $x$, with the condition that Bob's answers must be at a relative Hamming distance at most $p_{err}$ from $x$. This security can be reduced to the constrained extended non-local game $\mathcal{G}_{C_{p_{err}}}^{\text{A }n\text{ guess}}$ described in Section~\ref{section:Alice_guess_game}.

Obtaining the values of $\omega_{admiss}^{(k=1)}(\mathcal{G}_C^{n-QKD})$ for $n=1,2$, see Fig.~\ref{Fig:QKD}, we find nontrivial upper bounds for the winning probability  $\mathcal{G}_{C_{p_{err}}}^{\text{A }n\text{ guess}}$, which translate to security for QKD. Notice that if the error per qubit is at least $\frac{1}{2}$, the eavesdropper can use the strategy consisting of sharing $n$ EPR pairs with the referee, performing the same measurement, and manipulating Bob's device so that it answers random bits (which each are going to be correct with probability $\frac{1}{2}$), therefore, the eavesdropper's outcomes will be perfectly correlated with the referee's outcomes so that she guesses perfectly the raw key and for $p_{err}\geq\frac{1}{2}$,  $\omega_Q(\mathcal{G}_{C_{p_{err}}}^{\text{A }n\text{ guess}})=1$, this confirms the that the results for $p_{err}\geq\frac{1}{2}$ obtained via $\omega_{admiss}^{(k=1)}(\mathcal{G}_{C_{p_{err}}}^{\text{A }n\text{ guess}})$ are tight, see Fig.~\ref{Fig:QKD}.

In order to prove security for an arbitrary $n$ and a given laboratory error $p_{err}$, one would need to solve the corresponding SDP $\omega_{admiss}^{(k=1)}(\mathcal{G}_{C_{p_{err}}}^{\text{A }n\text{ guess}})$ (or larger $k$). However, the downside is that larger values of $n$ take considerably more computational resources.

\subsection{Application to quantum position verification}\label{sec:QPV}

Securely finding out a party's location (position verification) or writing messages that can only be read at a certain location are potentially-impactful tasks part of the field of \emph{position-based cryptography} (PBC). An important building block for PBC is so-called Position Verification (PV), where, in a 1-dimensional setup, an untrusted prover $P$ wants to convince verifiers $V_0,$ and $V_1$, placed left and right of $P$, respectively, that she is at a certain given position. In \cite{OriginalPositionBasedCryptChandran2009} it was proven that no secure classical protocol for position verification can exist, since there exists a general attack based on copying classical information. 

Due to the no-cloning theorem~\cite{Wootters1982NoCloning} the general classical attack does not apply if quantum information is used instead \cite{OriginalQPV_Kent2011,Malaney_QLP}, however, a general quantum attack exists which requires exponential entanglement~\cite{Buhrman_2014,Beigi_2011}. This means that hope for protocols secure against reasonable amounts of entanglement is alive, and indeed there has been much analysis on attacks on specific protocols~\cite{PatentKentANdOthers,OriginalQPV_Kent2011,Lau_2011,https://doi.org/10.48550/arxiv.1504.07171,Chakraborty_2015,speelman2016instantaneous,dolev2019constraining,dolev2022non,gonzales2019bounds,cree2022code}.
Also of note is recent work performing the first experiment that implements QPV in a lab setting~\cite{loffler2025towards}.

One of the simplest and best-studied QPV protocols~\cite{PatentKentANdOthers,OriginalQPV_Kent2011} is based on BB84 states, denoted by \QPVBB. The protocol consists of one verifier ($V_0$) sending a BB84 state ${\ket\psi\in\{\ket0,\ket1,\ket+,\ket-\}}$ to the prover, and the other verifier ($V_1$) sending classical information $x\in\{0,1\}$ describing in which basis the prover has to measure, either the computational basis ($x=0$) or the Hadamard basis ($x=1$). The prover then has to broadcast the measurement outcome $a\in\{0,1\}$ to both verifiers, with all communication happening at the speed of light.

The most general attack on any 1-dimensional QPV protocol is to place an adversary between $V_0$ and the prover, Alice, and another adversary between the prover and $V_1$,  Bob, such that the adversaries intercept the messages coming from their closest verifiers, have one round of simultaneous communication, and reply to their closest verifier.

For security analysis, we will consider the purified version of~\QPVBB, which is equivalent to the defined version above. In the purified version, $V_0$, instead of sending a BB84 state, prepares the EPR pair $\ket{\Phi^+}=\frac{\ket{00}+\ket{11}}{\sqrt{2}}$ and sends one qubit to $P$ and at some later point, $V_0$ measures their register in basis~$x$. In particular, it makes it easy to  delay the choice of basis~$x$ to a later point in time, which makes it clear that the attackers' actions are independent of the basis choice. 

In \cite{Buhrman_2014} it was proven that \QPVBB~is secure if the attackers do not pre-share entanglement before to intercept the information. They did so by upper bounding the probability that both Alice and Bob can guess the outcome of $V_0$'s measurement (which they use to answer in the last step of the attack). 
In~\cite{TomamichelMonogamyGame2013}, it was shown that security of \QPVBB~can be reduced to the $ \mathcal{G}^{BB84}$ game.

In an experimental implementation, one might expect the prover to answer incorrectly with a certain probability. However, since $P$ broadcasts classical information, it is not expected that she 
ever sends different answers (\emph{inconsistent answers}) to the two verifiers. Therefore, if in any round the verifiers receive an inconsistent answer, they can abort the entire protocol since they have observed something that will never happen in an honest execution of the protocol. Therefore, it is natural to study security adding related constraints to the attackers, which might lower their probability to win the extended non-local game, and thereby provide a tighter bound on the best attack of the protocol. A natural constraint to $\mathcal{G}^{BB84}$ would be to forbid inconsistent answers, i.e.\ to impose the set of constraints $C$ for every  strategy $S_{Q}=\{\rho_{RAB},A^x_a,B^y_b\}_{x,y,a,b}$ given by
\begin{equation}
    \sum_{x,a,a'\neq b'}q(x)\tr{(V^{x}_{a}\otimes A^x_{a'} \otimes B^x_{b'})\rho_{RAB}}=0.
\end{equation}
This corresponds to the constrained $\mathcal{G}^{BB84}$ game analyzed in Section~\ref{section:cons_BB84}, showing that unentangled attackers cannot win it with probability greater than $\cos^2(\frac{\pi}{8})$, corresponding to the optimal winning probability without constraining the game.

In a practical application, not only errors arise, but also a sizable fraction sent from the verifiers to the prover is lost. In the case of \QPVBB, if the transmission rate from the channel connecting $V_0$ and $P$ is $\eta$, the prover is expected to answer either 0, 1 or that she did not receive the qubit (with probability $1-\eta$). The lossy version of \QPVBB~will be denoted by \QPVBBeta. In \cite{Escol_Farr_s_2023} it was shown that the security of \QPVBBeta~could be reduced to the lossy-and-constrained extended non-local game $\mathcal{G}_{C,\eta}^{BB84}$, which we analyzed in Section \ref{section:lossy_moe_games}, with $\omega_{Q}(\mathcal{G}_{C,\eta}^{BB84})=\frac{1}{\sqrt{2}}\eta+\sin^2\big(\frac{\pi}{8}\big)$. We therefore see that for an honest prover without error, the protocol remains secure for $\eta>\frac{1}{2}$, since the honest prover will answer correctly with probability $\eta$ and $\eta>\frac{1}{\sqrt{2}}\eta+\sin^2\big(\frac{\pi}{8}\big)$ for all $\eta>\frac{1}{2}$. Moreover, for a given $\eta$, the protocol tolerates a total error $p_{err}$ as long as
\begin{equation}
    \eta(1-p_{err})>\frac{1}{\sqrt{2}}\eta+\sin^2\big(\frac{\pi}{8}\big).
\end{equation}
Since the results were tight, the above inequality provides the optimal relation with the error. We saw that using our new methodology,  the results of \cite{Escol_Farr_s_2023} are verified in a simplified way, since there it was necessary to derive inequalities using the norms of the verifier's measurements but just the description of the protocol. 

The application of Section \ref{section:lossy_moe_games} in QPV not only verifies in a different way preveously known results but improves some providing tight results. Consider the extension of \QPVBB~consisting on $V_0$ sending a uniformly random state from the set  $\{\ket0,\ket1,\ket+,\ket-,\ket{+i} ,\ket{-i}\}$ and $V_1$ sending in which basis to measure (Hadamard, computational or $\{\ket{\pm i}\}$), introduced in \cite{OriginalQPV_Kent2011}. The security of this protocol can be reduced to the winning probability of the extended non-local game $\mathcal G^{3-bases}$, as shown in \cite{Escol_Farr_s_2023}. This game was analyzed in Section \ref{section:lossy_moe_games}, and it has optimal value $\frac{1}{2}+\frac{\sqrt{3}}{6}$, therefore, attackers can spoof the verifiers with at most a probability $\frac{1}{2}+\frac{\sqrt{3}}{6}\simeq0.788675 $ per round. In addition, if loss is considered, security reduced to the winning probability of $\mathcal G^{3-bases}_{C,\eta}$ where the set of constraints is such that different answers are forbidden, whose optimal winning probability is given in Result \ref{result:3_bases}. Notice that we have that $\omega_Q(\mathcal G^{3-bases}_{C,\eta})=\omega_Q(\mathcal G^{3-bases}_{\eta})$, since the optimal strategy given in Result \ref{result:3_bases} is such that $C$ is fulfilled. Then, for an error-free prover, one expects correct answers with probability $\eta$ and the protocol is secure for $\eta>\frac{1}{3}$, since for this range, $\eta>\omega_Q(\mathcal G^{3-bases}_{C,\eta})$. Then, given a transmission rate $\eta$, the tight relation with the error $p_{err}$ is given by:
\begin{enumerate}[label=\textit{(\roman*)}]
    \item if $ \eta\in[\frac{1}{3},\frac{2}{3})$, 
    \begin{equation}
    \eta(1-p_{err})>\frac{1}{3}+\frac{1}{\sqrt{2}}(\eta-\frac{1}{3}),  
\end{equation}
\item if $\eta\in[\frac{2}{3},1]$,
\begin{equation}
    \eta(1-p_{err})>\left(\frac{1}{\sqrt{2}}-\frac{1}{\sqrt{3}}\right)+\frac{1-\sqrt{2}+\sqrt{3}}{2}\eta.
\end{equation}

\end{enumerate}

\subsubsection{Security for 2-fold parallel repetition of \QPVBBeta}

The security of the $n$-fold parallel repetition of \QPVBBeta, given that the attackers do not pre-share entanglement, can can be reduced to the $n$-fold parallel repetition of the lossy extended non-local game $\mathcal{G}_{\eta}^{BB84\times n}$, with
$\mathcal X=\mathcal Y=\{0,1\}^n$ and $\mathcal A=\mathcal B=\{0,1,\perp\}^n$.

In this section, we analyze the case $n=2$ and we show security for the 2-fold parallel repetition of \QPVBBeta, given that the attackers do not pre-share entanglement. Imposing the same loss rate for both qubits, i.e.,
\begin{equation}
        \sum_{x_0x_1}q(x_0x_1)\tr{(\mathbb{I}_R\otimes\mathbb{I}_R\otimes A^{x_0x_1}_{\perp\perp} B^{x_0x_1}_{\perp\perp}) \rho_{RAB}}=(1-\eta)^2,
\end{equation}
for all $a_0,a_0'\in\{0,1\}$,
\begin{equation}
        \sum_{x_0x_1}q(x_0x_1)\tr{(\mathbb{I}_R\otimes\mathbb{I}_R\otimes A^{x_0x_1}_{a_0\perp} B^{x_0x_1}_{a_0'\perp}) \rho_{RAB}}=\frac{1}{2}\eta(1-\eta),
\end{equation}
and for all $a_1,a_1'\in\{0,1\}$,
\begin{equation}
        \sum_{x_0x_1}q(x_0x_1)\tr{(\mathbb{I}_R\otimes\mathbb{I}_R\otimes A^{x_0x_1}_{\perp a_1} B^{x_0x_1}_{\perp a_1'}) \rho_{RAB}}=\frac{1}{2}\eta(1-\eta), 
\end{equation}
we obtain an upper bound solving the SDP that gives the value $\omega_{admiss}^{(k=1)}( \mathcal{G}_{\eta}^{BB84\times 2})$. The obtained values, see \cite{code} for the code, are plotted in Fig.~\ref{Fig:parallel_QPV}, together with the winning value of the parallel repetition of the optimal strategy for $ \mathcal{G}_{\eta}^{BB84}$. We see that the upper bounds are slightly grater than the values corresponding to the parallel repetition of the optimal strategy, and thus it remains open if strong parallel repetition holds when there is loss. 

An error-free honest prover will answer correctly with probability $\eta^2$, which is strictly larger than the upper bounds found for $\omega_Q(\mathcal{G}_{\eta}^{BB84\times 2}$), for all $\eta>\frac{1}{2}$. In addition,   if the the total error of the prover is $p_{err}$, we have security for 2-fold parallel repetition as long as the probability of answering correctly, given by $\eta^2(1-p_{err})$, is such that
\begin{equation}
    \eta^2(1-p_{err})>\omega_{admiss}^{(k=1)}( \mathcal{G}_{\eta}^{BB84\times 2}).
\end{equation}

Security for the $n$-fold parallel repetition for a given $n$ can be computed analogously. Since our techniques do run into a practical limit when increasing $n$, we leave the open problem to create a more efficient program that computes the parallel repetition faster. 

\begin{figure}[H]
\centering
\includegraphics[width=125mm]{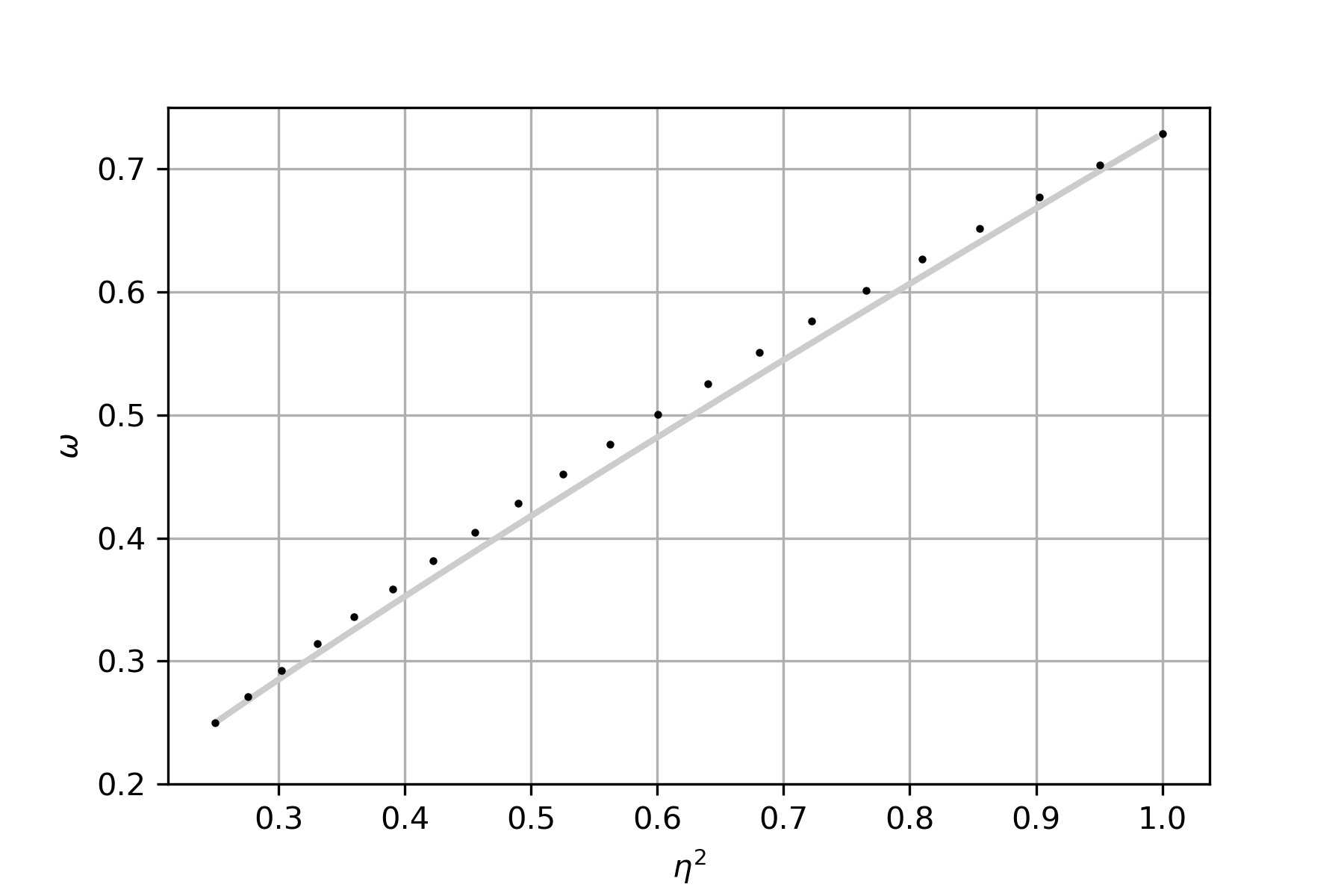}
\caption{Values of $\omega_{admiss}^{(k=1)}( \mathcal{G}_{\eta}^{BB84\times 2})$ (black dots) and winning values for the parallel repetition of the optimal strategy for $\mathcal{G}_{\eta}^{BB84}$ (gray continuous line).}
\label{Fig:parallel_QPV}
\end{figure}

\section{Discussion and open problems}

We give an approach using SDP hierarchies to extend security proofs of quantum cryptographic protocols to realistic scenarios. A first natural extension of this work, could be to apply our techniques to other problems in quantum cryptography, especially when proving security of near-term experimental setups.
A possible future task could be optimizing the code provided in this work, making it feasible to compute SDP solutions for larger instances of $n$ in some of the above analyzed games. In addition, since the solutions of the solved SDPs provide tighter results than Proposition~\ref{prop:bound_local_guess}, it would be interesting to find either a tighter version of that proposition or the optimal value for all $n$. 
In our application to relativistic bit commitment, we show partial loss tolerance of the protocol without having to adjust the communication pattern. A different way to make the protocol loss tolerant could be to introduce an extra round of communication, where the committer announces immediately which qubits arrived successfully.

\section{Acknowledgments }
This work was supported by the Dutch Ministry of Economic Affairs and Climate Policy (EZK), as part of the Quantum Delta NL programme. We would like to thank an anonymous referee for pointing out a mistake in the SDP solutions of Section~\ref{section:cons_BB84} in an earlier version of this work --- thanks to their observation we were able to correct a typo in the code for the corresponding game \cite{code}, and could also find strategies that matched the bounds provided by the (correct) SDP solutions.

\bibliographystyle{quantum}
\bibliography{references}

\begin{appendices}
    \section{Proof of Proposition \ref{prop:bound_local_guess}}
We show the proof of Proposition \ref{prop:bound_local_guess} \cite{Pital_a_Garc_a_2018}. For that, we need the following Definition and Lemmas.
\begin{definition} Let $n\in\mathbb N$. Two permutations $\pi,\pi':[n]\rightarrow[n]$ are said to be \emph{orthogonal} if $\pi(i)\neq\pi'(i)$ for all $i\in[n]$. 
\end{definition}

\begin{lemma} \label{lem:sum_projectors}\emph{(Lemma 2 in \cite{TomamichelMonogamyGame2013})}
    Let $\Pi^1,\ldots,\Pi^n$ be projectors acting on a Hilbert space $\mathcal H$. Let $\{\pi_k\}_{k\in[n]}$ be a set of mutually orthogonal permutations. Then, 
    \begin{equation}
        \bigg\|\sum_{i\in[n]}\Pi^i \bigg\|\leq \sum_{k\in[n]}\max_{i\in[n]}\big\| \Pi^i\Pi^{\pi_k(i)}\big\|.
    \end{equation}
\end{lemma}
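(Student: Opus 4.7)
The plan is to reduce the operator norm of the sum $Q := \sum_i \Pi^i$ to the operator norm of a block matrix, and then exploit the mutual orthogonality of the permutations to split that block matrix into a sum of $n$ block-diagonal pieces (up to unitary rearrangement), each of whose norm is exactly $\max_i \|\Pi^i \Pi^{\pi_k(i)}\|$.

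Concretely, I would first introduce the operator $C : \mathcal{H} \to \mathcal{H}\otimes\mathbb{C}^n$ defined by $C = \sum_{i\in[n]} \Pi^i \otimes \ket{i}$. Since each $\Pi^i$ is a projector (hence self-adjoint and idempotent), a direct calculation gives $C^*C = \sum_i \Pi^i = Q$, and therefore $\|Q\| = \|C^*C\| = \|CC^*\|$. Expanding $CC^* = \sum_{i,j} \Pi^i \Pi^j \otimes \ketbra{i}{j}$ reduces the problem to controlling the norm of this block matrix.

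Next I would use the hypothesis on the permutations to reindex the sum over $j$. For each fixed $i$, mutual orthogonality means the values $\pi_1(i), \ldots, \pi_n(i)$ are pairwise distinct, hence form all of $[n]$; therefore one can write
\begin{equation}
    CC^* = \sum_{k\in[n]} M_k, \qquad M_k := \sum_{i\in[n]} \Pi^i \Pi^{\pi_k(i)} \otimes \ketbra{i}{\pi_k(i)}.
\end{equation}
The key observation is that each $M_k$ factors as $M_k = D_k U_k$, where $D_k := \sum_i \Pi^i \Pi^{\pi_k(i)} \otimes \ketbra{i}{i}$ is block-diagonal on $\mathcal{H}\otimes\mathbb{C}^n$ and $U_k := \mathbb{I}_\mathcal{H} \otimes \sum_j \ketbra{j}{\pi_k(j)}$ is unitary (it implements the permutation $\pi_k^{-1}$ on the $\mathbb{C}^n$ factor). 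Since $U_k$ is unitary, $\|M_k\| \leq \|D_k\|$, and block-diagonality of $D_k$ yields $\|D_k\| = \max_i \|\Pi^i\Pi^{\pi_k(i)}\|$.

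Finally, applying the triangle inequality gives
\begin{equation}
    \|Q\| = \|CC^*\| = \bigl\|\textstyle\sum_k M_k\bigr\| \leq \sum_k \|M_k\| \leq \sum_k \max_i \|\Pi^i\Pi^{\pi_k(i)}\|,
\end{equation}
which is the desired bound. The only non-routine step is the reindexing via the mutually orthogonal permutations and the block-diagonal-plus-permutation factorization of $M_k$; the rest is bookkeeping with operator-norm identities for self-adjoint operators and the basic inequality $\|AB\|\leq \|A\|\|B\|$. I would expect no serious technical obstacle — the main content is really the choice of $C$ and the combinatorial use of the orthogonality hypothesis to align each $M_k$ with a block-diagonal structure.
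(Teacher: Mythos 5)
Your proof is correct. The paper itself does not reprove this lemma (it is quoted directly from the cited reference), and your argument — form the dilation $C=\sum_i \Pi^i\otimes\ket{i}$, pass from $C^*C=\sum_i\Pi^i$ to $CC^*=\sum_{i,j}\Pi^i\Pi^j\otimes\ketbra{i}{j}$, split the block Gram matrix along the $n$ generalized diagonals indexed by the mutually orthogonal permutations, and bound each via the triangle inequality and the norm of a block-diagonal-times-permutation-unitary factorization — is essentially the same proof as Lemma 2 of the cited Tomamichel et al.\ paper.
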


\begin{remark}
    There always exists a set of $n$ permutations of $[n]$ that are mutually orthogonal, an example is the ~$n$ cyclic shifts.  
\end{remark}

\begin{lemma}\label{lem:norm_product}(Lemma 1 in \cite{TomamichelMonogamyGame2013})
    Let $A,B,L\in\mathcal{B}(\mathcal{H})$ such that $AA^\dagger\succeq B^\dagger B$. Then it holds that $\norm{AL}\geq\norm{BL}$. 
\end{lemma}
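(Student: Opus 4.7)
My plan is to reduce this operator-norm bound to a pointwise inequality of quadratic forms. The starting identity is $\|XL\|^2=\sup_{\|v\|=1}\|XLv\|^2=\sup_{\|v\|=1}\langle Lv, X^\dagger X\, Lv\rangle$, which I will apply to both $X=A$ and $X=B$. This turns the two operator norms into suprema of the \emph{same} test vector $Lv$ paired against $A^\dagger A$ and $B^\dagger B$ respectively, so the comparison only has to be carried out between these two Gram-type operators.

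Next I will invoke the hypothesis at the vector $w=Lv$: positive semidefiniteness of $A^\dagger A - B^\dagger B$ means $\langle w,(A^\dagger A - B^\dagger B)w\rangle\geq 0$ for every $w\in\mathcal H$, and specializing to $w=Lv$ gives $\langle Lv, A^\dagger A\, Lv\rangle\geq \langle Lv, B^\dagger B\, Lv\rangle$ for every unit $v$. Taking the supremum over unit $v$ on both sides preserves the inequality and yields $\|AL\|^2\geq\|BL\|^2$; monotonicity of the square root on $[0,\infty)$ then gives $\|AL\|\geq\|BL\|$. The whole argument is essentially one algebraic identity followed by substitution, so there is no real analytic or combinatorial work to do.

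The main obstacle is in fact interpretive rather than technical: the lemma as literally stated uses the hypothesis $AA^\dagger\succeq B^\dagger B$, and with that hypothesis the conclusion fails. An explicit $2\times 2$ counterexample is $A=\ketbra{0}{1}$, $B=\ketbra{0}{0}$, $L=\ketbra{0}{0}$, for which $AA^\dagger=\ketbra{0}{0}=B^\dagger B$ so the hypothesis holds, yet $AL=0$ while $BL=\ketbra{0}{0}$, giving $\|AL\|=0<1=\|BL\|$. I therefore read the intended hypothesis as $A^\dagger A\succeq B^\dagger B$, matching Lemma~1 of \cite{TomamichelMonogamyGame2013} as cited; with that correction the two-line reduction above is complete, and the only further remark worth making is that the same argument works without change for any Hilbert space since we never use finite dimensionality.
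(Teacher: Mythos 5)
Your proof is correct, and your diagnosis is also correct: the hypothesis should read $A^\dagger A\succeq B^\dagger B$, as it does in Lemma~1 of the cited Tomamichel et al.\ reference, and your $2\times 2$ counterexample is valid evidence that the printed hypothesis $AA^\dagger\succeq B^\dagger B$ does not imply the conclusion. Since the paper states this lemma by citation and supplies no proof of its own, there is nothing internal to compare against; your argument --- write $\norm{XL}^2=\sup_{\norm{v}=1}\langle Lv,\,X^\dagger X\,Lv\rangle$, insert the pointwise inequality coming from $A^\dagger A-B^\dagger B\succeq 0$ at $w=Lv$, and take suprema --- is the standard one-line proof, equivalent to observing $\norm{AL}^2=\norm{L^\dagger A^\dagger A L}\geq\norm{L^\dagger B^\dagger B L}=\norm{BL}^2$ because the norm of a positive operator is its largest eigenvalue and conjugation by $L$ preserves the semidefinite order. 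One point worth adding to your remarks: in the only place the paper actually invokes Lemma~\ref{lem:norm_product}, namely the Appendix proof of Proposition~\ref{prop:bound_local_guess}, the operators plugged in ($P^{zx}$, $Q^{z'x'}$, $\Pi^{zx}$, $\Pi^{z'x'}$) are all self-adjoint projectors, for which $AA^\dagger=A^\dagger A$ trivially, so the typo does not open a gap in the downstream argument; the statement of the lemma should nonetheless be corrected to $A^\dagger A\succeq B^\dagger B$.
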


Now, we are in position to prove Proposition \ref{prop:bound_local_guess}, which uses similar ideas as in the proof of Theorem 3 in \cite{TomamichelMonogamyGame2013}. Let $\mathcal{S}=\{\rho_{RAB}, A^x_a, B^x_b\}$ be an arbitrary strategy.  As above, let $\mathcal{T}_{xz}:=\{i\mid z_i=x\}$, and denote by $\mathcal{T}_{xz}^c$ its complement, and, abusing of notation, we will write $\sum_{v\in\mathcal{T}_{xy}}$ for a bit string $v\in\{0,1\}$  to sum the indices $i$ of $v$ such that $i\in \mathcal{T}_{xz}$.     Then
\begin{equation}\label{eq:w<=permutations}
\begin{split}
    \omega_{Q}&(\mathcal{G}^{n-\text{local guessing}})\big|_{\mathcal{S}}=\frac{1}{2^{n+1}}\sum_{(z,x,v,a,b)\in W}\tr{(V^z_v\otimes A^x_a\otimes B^x_b)\rho_{RAB}}\\&=\frac{1}{2^{n+1}}\sum_{z,x}\sum_{v\in\mathcal T_{zx}^c,a}\tr{\big(\ketbra{a^{x\ldots x}_{\mathcal{T}_{zx}}}{a^{x\ldots x}_{\mathcal{T}_{zx}}}\otimes \ketbra{v^{1-x\ldots 1-x}_{\mathcal{T}_{zx}^c}}{v^{1-x\ldots 1-x}_{\mathcal{T}_{zx}^c}}\otimes A^x_{a_{\mathcal{T}_{zx}}a_{\mathcal{T}_{zx}^c}}\otimes B^x_{a_{\mathcal{T}_{zx}}a_{\mathcal{T}_{zx}^c}}\big)\rho_{RAB}}\\&\leq
    \frac{1}{2^{n+1}}\bigg\|\sum_{z,x}\sum_{v\in\mathcal T_{zx}^c,a}\ketbra{a^{x\ldots x}_{\mathcal{T}_{zx}}}{a^{x\ldots x}_{\mathcal{T}_{zx}}}\otimes \ketbra{v^{1-x\ldots 1-x}_{\mathcal{T}_{zx}^c}}{v^{1-x\ldots 1-x}_{\mathcal{T}_{zx}^c}}\otimes A^x_{a_{\mathcal{T}_{zx}}a_{\mathcal{T}_{zx}^c}}\otimes B^x_{a_{\mathcal{T}_{zx}}a_{\mathcal{T}_{zx}^c}}\bigg\|\\&=\frac{1}{2^{n+1}}\bigg\|\sum_{z,x}\sum_{v\in\mathcal T_{zx}^c,a}\ketbra{a^{x\ldots x}_{\mathcal{T}_{zx}}}{a^{x\ldots x}_{\mathcal{T}_{zx}}}\otimes \mathbb I_{\mathcal T^c_{zx}}\otimes A^x_{a_{\mathcal{T}_{zx}}a_{\mathcal{T}_{zx}^c}}\otimes B^x_{a_{\mathcal{T}_{zx}}a_{\mathcal{T}_{zx}^c}}\bigg\|\\&=\frac{1}{2^{n+1}}\bigg\|\sum_{z,x}\Pi^{zx}\bigg\|\leq \frac{1}{2^{n+1}}\sum_k\max_{xz}\norm{\Pi^{zx}\Pi^{\pi_k(zx)}},
    \end{split}
\end{equation}
where we used $\sum_{v\in\mathcal T_{zx}^c}\ketbra{v^{1-x\ldots 1-x}_{\mathcal{T}_{zx}^c}}{v^{1-x\ldots 1-x}_{\mathcal{T}_{zx}^c}}=\mathbb I_{\mathcal T^c_{zx}}$ where $I_{\mathcal T^c_{zx}}$ is the identity in the systems $T^c_{zx}$, in the last equality we defined $\Pi^{zx}:=\sum_{a}\ketbra{a^{x\ldots x}_{\mathcal{T}_{zx}}}{a^{x\ldots x}_{\mathcal{T}_{zx}}}\otimes \mathbb I_{\mathcal T^c_{zx}}\otimes A^x_{a_{\mathcal{T}_{zx}}a_{\mathcal{T}_{zx}^c}}\otimes B^x_{a_{\mathcal{T}_{zx}}a_{\mathcal{T}_{zx}^c}}$, and we used Lemma \ref{lem:sum_projectors} in the last inequality, for  a set of mutually orthogonal permutations $\{\pi_k\}$. Fix $k$, let $z'x':=\pi_k(zx)$, let $\mathcal I:=\mathcal{T}_{zx}\cap\mathcal{T}_{z'x'}$, let $i=\abs{\mathcal I}$ and denote $\mathcal I^c=[n]\setminus \mathcal I$. We define
\begin{equation}
\begin{split}
    P^{zx}&:=\sum_{a}\ketbra{a_{\mathcal I}^{x\ldots x}}{a_{\mathcal I}^{x\ldots x}}\otimes \mathbb I_{\mathcal I^c}\otimes A^x_a\otimes\mathbb I_{B}, \\Q^{z'x'}&:=\sum_{a'}\ketbra{a_{\mathcal I}'^{x'\ldots x'}}{a_{\mathcal I}'^{x'\ldots x'}}\otimes \mathbb I_{\mathcal I^c}\otimes \mathbb I_A\otimes B^x_{a'}.
\end{split}
\end{equation}
Note that $\Pi^{zx}\preceq P^{zx}$ and $\Pi^{z'x'}\preceq Q^{z'x'}$, then, by Lemma \ref{lem:norm_product}, $\norm{\Pi^{zx}\Pi^{z'x'}}\leq\norm{P^{zx}Q^{z'x'}}=\norm{P^{zx}Q^{z'x'}Q^{z'x'}P^{zx}}^{\frac{1}{2}}=\norm{P^{zx}Q^{z'x'}P^{zx}}^{\frac{1}{2}}$. We compute the last product of matrices:
\begin{equation}
\begin{split}
     P^{zx}Q^{z'x'}P^{zx}&=\sum_{a,a',a''}\ket{a^{x\ldots x}_{\mathcal I}}\braket{a_{\mathcal I}^{x\ldots x}}{a_{\mathcal I}'^{x'\ldots x'}}\braket{a_{\mathcal I}'^{x'\ldots x'}}{a_{\mathcal I}''^{x\ldots x}}\bra{a_{\mathcal I}''^{x\ldots x}}\otimes \mathbb I_{\mathcal I^c}\otimes A^x_{a}A^{x}_{a''}\otimes B^{x'}_{a'}\\
     &=\sum_{a,a'}\abs{\braket{a_{\mathcal I}^{x\ldots x}}{a_{\mathcal I}'^{x'\ldots x'}}}^2\ketbra{a^{x\ldots x}_{\mathcal I}}{a^{x\ldots x}_{\mathcal I}}
     \otimes \mathbb I_{{\mathcal I^c}}\otimes A^x_{a}\otimes B^{x'}_{a'},
\end{split}
\end{equation}
where we used that $A^x_aA^x_{a'}=\delta_{aa'}A^x_a$. We distinguish two cases:
\begin{enumerate}[label=\textit{(\roman*)}]
    \item if $x\neq x'$, we have that $ \abs{\braket{a_{\mathcal I}^{x\ldots x}}{a_{\mathcal I}'^{x'\ldots x'}}}^2=2^{-i}$,  and then, 
    \begin{equation*}
        P^{zx}Q^{z'x'}P^{zx}=\sum_{a}2^{-i}\ketbra{a^{x\ldots x}_{\mathcal I}}{a^{x\ldots x}_{\mathcal I}}
     \otimes \mathbb I_{{\mathcal I^c}}\otimes A^x_{a}\otimes\sum_{a'} B^{x'}_{a'}=\sum_{a}2^{-i}\ketbra{a^{x\ldots x}_{\mathcal I}}{a^{x\ldots x}_{\mathcal I}}
     \otimes \mathbb I_{{\mathcal I^c}}\otimes A^x_{a}\otimes \mathbb I_B,
    \end{equation*}
    therefore 
    \begin{equation}
        \norm{P^{zx}Q^{z'x'}P^{zx}}\leq 2^{-i}.
    \end{equation}
    \item If $x= x'$, then $\abs{\braket{a_{\mathcal I}^{x\ldots x}}{a_{\mathcal I}'^{x'\ldots x'}}}^2=\delta_{aa'}$ and we use the trivial bound 
  \begin{equation}
      \norm{P^{zx}Q^{z'x'}P^{zx}}\leq 1.
  \end{equation}
\end{enumerate}
Therefore, we have that 
\begin{equation}
    \norm{\Pi^{zx}\Pi^{z'x'}}\leq\begin{cases}
        \sqrt{2^{-i}} &\text{ if } x\neq x',\\
        1 &\text{ if } x= x'.
    \end{cases}
\end{equation}
In order to apply it in the bound \eqref{eq:w<=permutations}, consider the set of permutations given by $\pi_k(zx)=zx\oplus k$, where $zx, k\in \{0,1\}^{n+1}$ (they are such that they have the same Hamming distance). Notice that exactly half of the permutations (i.e.~$2^n$) are such that $x=x'$ and the other half are such that $x\neq x'$. For the latter case, there are $\binom{n}{i}$ permutations with Hamming distance $i$. Then, 
\begin{equation}
\begin{split}
     \omega_{Q}(\mathcal{G}^{n-\text{local guessing}})\big|_{\mathcal{S}}&\leq\frac{1}{2^{n+1}}\bigg\|\sum_{z,x}\Pi^{zx}\bigg\|\leq \frac{1}{2^{n+1}}\sum_k\max_{xz}\norm{\Pi^{zx}\Pi^{\pi_k(zx)}}\\&\leq \frac{1}{2^{n+1}}\left(2^n+\sum_{i=0}^n\binom{n}{i} \sqrt{2^{-i}}\right)=\frac{1}{2}+\frac{1}{2}\left(\frac{1+1/\sqrt{2}}{2}\right)^n.
     \end{split}
\end{equation}
\end{appendices}

\end{document}